\documentclass[english, 12pt, renew]{article}
\usepackage{tikz}
\usepackage{verbatim}
\usetikzlibrary{decorations.pathreplacing,arrows.meta,patterns,shapes, calligraphy}
\usepackage{mathtools}
\pdfoutput=1
\usepackage{ae,aecompl}
\usepackage[T1]{fontenc}
\usepackage[utf8]{inputenc}
\usepackage[margin = 1.in]{geometry}
\usepackage{fancyhdr}
\pagestyle{fancy}
\usepackage{color}

\usepackage{bm}

\usepackage{babel}
\usepackage{amsmath}

\usepackage{amsthm}
\usepackage{amssymb}
\usepackage[authoryear]{natbib}
\usepackage{blindtext}
\usepackage{graphicx}
\usepackage{subcaption}
\usepackage[utf8]{inputenc}
\usepackage[export]{adjustbox}
\usepackage{wrapfig}
\usepackage{hyperref}
\usepackage{breakurl}
\usepackage{tcolorbox}
\usepackage{breakurl}
\usepackage[normalem]{ulem}
\linespread{1.30}
  \theoremstyle{definition}
  \newtheorem{definition}{\protect\definitionname}
  \theoremstyle{plain}
  
\theoremstyle{plain}
\newtheorem{theorem}{\protect\theoremname}
  \theoremstyle{plain}
  \newtheorem{lemma}{\protect\lemmaname}
  \theoremstyle{remark}
  
  \theoremstyle{plain}
  \newtheorem{cor}{\protect\corollaryname}
 \theoremstyle{definition}
  \newtheorem{example}{\protect\examplename}
   \theoremstyle{definition}
  
  \theoremstyle{plain}
  
  \theoremstyle{plain}
  \newtheorem{proposition}{\protect\propositionname}
  \theoremstyle{definition}

  \theoremstyle{remark}

\usepackage{babel}
\usepackage{sgame}
\usepackage{epigraph}
\usepackage[normalem]{ulem}
\usepackage{cancel}
\usepackage{authblk}

 \newcommand{\noop}[1]{}

\makeatletter
\renewcommand*{\@fnsymbol}[1]{\ensuremath{\ifcase#1\or *\or @\or \ddagger\or
    \mathsection\or \mathparagraph\or \|\or **\or \dagger\dagger
    \or \ddagger\ddagger \else\@ctrerr\fi}}
\makeatother

  \providecommand{\axiomname}{Axiom}
  \providecommand{\definitionname}{Definition}
  \providecommand{\examplename}{Example}
  \providecommand{\lemmaname}{Lemma}
  \providecommand{\remarkname}{Remark}
\providecommand{\corollaryname}{Corollary}
\providecommand{\theoremname}{Theorem}
\providecommand{\notationname}{Notation}
\providecommand{\resultname}{Result}
\providecommand{\propositionname}{Proposition}
\providecommand{\assumptionname}{Assumption}
 \providecommand{\claimname}{Claim}
\pagestyle{plain} \providecommand{\vect}[1]{\boldsymbol{#1}}
\RequirePackage{lineno}

\title{\large{\textbf{A Free and Fair Economy: A Game of Justice and Inclusion}}\thanks{The authors thank Frank Riedel, Sarah Auster, Jan-Henrik Steg, Niels Boissonet, Andr\'{e} Casajus, Hulya Eraslan, and seminar participants at The University of Texas Rio Grande Valley, Bielefeld University, and the Texas Economic Theory Camp for their valuable and insightful comments and suggestions. Demeze-Jouatsa gratefully acknowledges financial support from the DFG (Deutsche Forschungsgemeinschaft / German Research Foundation) via grant Ri 1128-9-1 (Open Research Area in the Social Sciences, Ambiguity in Dynamic Environments), Bielefeld Young Researchers' Fund, the BGTS Mobility Grants, and the University of Ottawa.}\vspace{-1em}}


\author{Ghislain H. Demeze-Jouatsa\thanks{Center for Mathematical Economics, University of Bielefeld; Email: demeze\_jouatsa@uni-bielefeld.de.} \ \ \ Roland Pongou\thanks{Department of Economics, University of Ottawa; Email: rpongou@uottawa.ca.} \ \ Jean-Baptiste Tondji\thanks{Department of Economics and Finance, The University of Texas Rio Grande Valley; Email: jeanbaptiste.tondji@utrgv.edu.} \vspace{-2em}}

\date{\today\\ \vspace{-2em}}

\begin{document}
\maketitle

\begin{abstract}
\small{Frequent violations of \textit{fair principles} in real-life settings raise the fundamental question of whether such principles can guarantee the existence of a self-enforcing equilibrium in a free economy. We show that elementary principles of distributive justice guarantee that a pure-strategy Nash equilibrium exists in a \textit{finite} economy where agents freely (and non-cooperatively) choose their inputs and derive utility from their pay. Chief among these principles is that: 1) your pay should not depend on your \textit{name}; and 2) a more productive agent should not earn less. When these principles are violated, an equilibrium may not exist. Moreover, we uncover an intuitive condition---\textit{technological monotonicity}---that guarantees equilibrium uniqueness and efficiency. We generalize our findings to economies with social justice and inclusion, implemented in the form of progressive taxation and redistribution, and guaranteeing a basic income to unproductive agents. Our analysis uncovers a new class of strategic form games by incorporating normative principles into non-cooperative game theory. Our results rely on no particular assumptions, and our setup is entirely non-parametric. Illustrations of the theory include applications to exchange economies, surplus distribution in a firm, contagion and self-enforcing lockdown in a networked economy, and bias in the academic peer-review system.}\newline
\\
\small{\textbf{Keywords}: Market justice; Social justice; Inclusion; Ethics; Discrimination; Self-enforcing contracts; Fairness in non-cooperative games; Pure strategy Nash equilibrium; Efficiency.
\\
\textbf{JEL Codes}: C72, D30, D63, J71, J38}
\end{abstract}

\epigraph{\footnotesize{\textit{``For Aristotle, justice means giving people what they deserve, giving each person his or her due."}}}
{\textit{\citet[P. 187]{sandel2010justice}}}

\section{Introduction} \label{sec:introduction}
It is generally acknowledged that \textit{justice} is the foundation of a stable, cohesive, and productive society.\footnote{The Merriam-Webster dictionary defines \textit{justice} as ``the maintenance or administration of what is just especially by the impartial adjustment of conflicting claims or the assignment of merited rewards or punishments."} However, 
violations of \textit{fair principles} are highly prevalent in real-life settings. For example, discriminations based on race, gender, culture and several other factors have been widely documented (see, for instance, \citet{reimers1983labor}, \citet{wright1991gender},  \citet{sen1992missing}, \citet{bertrand2004emily}, \citet{anderson2010missing}, \citet{pongouserrano2013}, \citet{goldin2017expanding}, \citet{bapuji2020organizations}, \citet{hyland2020gendered}, \citet{card2020referees},  and \citet{koffi2021racial}). These realities raise the fundamental question of how basic principles of justice affect individual incentives, and whether such principles can guarantee the stability and efficiency of contracts among private agents in a free and competitive economy. That the literature has remained silent on this question is a bit surprising, given the long tradition of ethical and normative principles in economic theory and the relevance of these principles to the real world \citep{sen2009idea, thomson2016fair}. 
The main goal of this paper is to address this problem.   In our treatment of this question, we incorporate elementary principles of justice and ethics into non-cooperative game theory. In doing so, we uncover a new class of strategic form games with a wide range of applications to classical and more recent economic problems.   

We precisely address the following questions:
\begin{enumerate}
\item[A:] How do \textit{fair principles} affect the stability of social interactions in a free economy?
\item[B:] Under which conditions do \textit{fair principles} lead to \textit{equilibrium efficiency}?
\end{enumerate}
To formalize these questions, we introduce a model of a \textit{free and fair economy}, where agents freely (and non-cooperatively) choose their inputs, and the surplus resulting from these input choices is shared following four elementary principles of distributive justice, which are:\\
\begin{enumerate}
\item \textbf{Anonymity}: Your pay should not depend on your \textit{name}.\footnote{Here, \textit{name} designates any unproductive individual characteristic such as first and last names, skin color, gender, religious or political affiliation, cultural background, etcetera. Anonymity means that a person's pay should not depend on their identity; in other words, given my input choice and that of others, my pay should not vary depending on whether I am called ``Emily/Greg" or ``Lakisha/Jamal" \citep{bertrand2004emily}, or depending on whether my skin color is black, white or green, or depending on whether I am a man or a woman.}
\item \textbf{Local efficiency}: No portion of the surplus generated at any profile of input choices should be wasted.
\item \textbf{Unproductivity}: An unproductive agent earns nothing.
\item \textbf{Marginality}: A more productive agent should not earn less.
\end{enumerate}
It is generally agreed that these ideals form the core principles of \textit{market} (or \textit{meritocratic}) \textit{justice}, and are of long tradition in economic theory. They have inspired eighteenth centuries writers like \citet{rousseau1895social} and \citet{aristotle1946}, and contemporary authors like  \citet{rawls1971theory}, \citet{shapley1953value}, \citet{young1985monotonic},  \citet{roemer1998theories}, \cite{de2008marginal}, \citet{sen2009idea}, \citet{sandel2010justice}, \citet{thomson2016fair}, and \citet{posner2018radical}, among several others. However, a number of empirical observations have suggested that the real world does not always conform to these elementary principles of justice. Studies have shown that \textit{anonymity} is violated in job hiring \citep{kraus2019evidence, bertrand2004emily}, in wages \citep{charles2008prejudice, lang2011education}, in scholarly publishing \citep{laband1994favoritism, ellison2002evolving, heckman2017publishing, serrano2018top5itis, akerlof2020sins, card2020referees}, in school admission \citep{francis2012redistributive,  grbic2015role}, in sexual norm enforcement \citep{pongouserrano2013}, in health care \citep{balsa2001statistical, thornicroft2007discrimination}, in household resource allocations \citep{sen1992missing, anderson2010missing}, in scholarly citations \citep{card2020referees, koffi2021gendered}, and in organizations \citep{small2020sociological, koffi2021racial}. These studies generally show that discrimination based on name, race, gender, culture, religion, and academic affiliation is prevalent in these different contexts. Violations of basic principles of justice therefore raise the fundamental question of how these principles affect individual incentives, the stability of social interactions, and economic efficiency. 

We examine these questions through the lens of a model of a free and fair economy. This model is a list $\mathcal{E}= (N, \times_{j\in N} X_j, o, f, \phi, (u_j)_{j\in N})$, where $N$ is a finite set of agents, $X_j$ a finite set of actions (or inputs) available to agent $j$, $o= (o_j)_{j\in N}$ a reference profile of actions, $f$ a production (or surplus) function (also called technology) that maps each action profile $x\in \times_{j\in N}X_j$ to a measurable output $f(x)\in \mathbb{R}$, $\phi$ an allocation scheme that distributes any realized surplus $f(x)$ to agents, and $u_j$ the utility function of agent $j$. The reference point $o$  can be interpreted as an unproduced endowment of goods (or resources) that can be either consumed as such, or may be used in the production process when production opportunities are specified. Agent $j$'s action set $X_j$ can be interpreted broadly, as we do not impose any particular structure on it other than it being finite. It may be viewed as a capability set \citep{sen2009idea}, or may represent the set of different occupations (or functions) available to agent $j$ based on agent $j$'s skills, or the set of effort levels that agent $j$ may supply in a production environment. The nature of the set of actions can also be different for each agent. For each input profile $x$, the allocation scheme $\phi$ distributes the generated surplus $f(x)$ following the aforementioned principles of \textit{anonymity}, \textit{local efficiency}, \textit{unproductivity}, and \textit{marginality}, and each agent $j$ derives utility from her payoff $u_j(x)=\phi_j(f, x)$.\footnote{The formalization of these principles differ depending on the context. Ours is a generalization of the classical formalization of \citet{shapley1953value} and \citet{young1985monotonic} to our economic environment. Indeed, we show that these four principles uniquely characterize a pay scheme that generalizes the classical Shapley value (Proposition \ref{uniqueshapley}). This pay scheme is a multivariate function defined at each input profile $x$; see also \citet{PongouTondji2018} and \citet{aguiar2018non, aguiar2020index}. Also, note that $u_j(x)$ can be any increasing function of the payoff $\phi_j(f, x)$, and the functional form might be different for each agent.}

To define an equilibrium concept that captures individuals' incentives in a free and fair economy, we first observe that any economy $\mathcal{E}$ induces a corresponding strategic form game $G^{\mathcal{E}}= (N, \times_{j\in N} X_j, (u_j)_{j\in N})$.\footnote{The class of free and fair economies therefore defines a large class of games that can be characterized as \textit{fair}. Any strategic form game is either fair or unfair, and some unfair games are simply a monotonic transformation of fair games.}  Then, a profile of actions $x^{*}\in \times_{j\in N} X_j$ is said to be an \textit{equilibrium} in the free and fair economy $\mathcal{E}$ if and only if it is a \textit{pure strategy} Nash equilibrium of the game $G^{\mathcal{E}}$. 


Our first main result shows that the principles of market justice stated above guarantee the existence of an equilibrium (Theorem \ref{theorem: non emptyness: shapley}). Moreover, when an economy violates these principles, an equilibrium may not exist. These findings have profound implications. One implication is that fair rules guarantee the existence of self-enforcing contracts between private agents in a free economy. A second implication is that fair rules prevent output (and income) volatility, given that action choices at equilibrium are \text{pure} strategies. Moreover, from a purely theoretical viewpoint, the incorporation of normative principles into non-cooperative game theory has led us to identify an interesting class of strategic form games that always have a pure strategy Nash equilibrium in spite of the fact that each player has a \textit{finite} action set.\footnote{As is well known, a pure strategy Nash equilibrium does not exist in a finite strategic form game in general \citep{nash1951non}. A growing literature seeks to identify conditions under which a pure strategy Nash equilibrium exists in a finite game (see, for example, \citet{rosenthal1973class}, \citet{monderer1996potential}, \citet{mallick2011existence}, \citet{carmona2020pure}, and the references therein). But unlike our paper, this literature has not approached this problem from a normative perspective. We therefore view our analysis as a contribution.}    

Although a pure strategy equilibrium always exists in any free and fair economy, this equilibrium may be inefficient. We uncover a simple structural condition that guarantees equilibrium efficiency. More precisely, we show that if the technology is \textit{strictly monotonic}, there exists a unique equilibrium, and this equilibrium is Pareto-efficient (Theorem \ref{monotonicresult2}). Quite interestingly, we find that when a monotonic economy fails to satisfy the principles of market justice, even if an equilibrium exists, it may be inefficient.\footnote{A clear example is the prisoner's dilemma game. Economies that are modeled by such games are monotonic, although their unique equilibrium is Pareto-inefficient.} 
A clear implication of this finding is that in the class of monotonic economies, any allocation scheme that violates the principles of market justice is welfare-inferior to the unique scheme that respects these principles.

Next, we extend our analysis to economies with \textit{social justice}. The principles of \text{market justice} imply that unproductive agents (for example, agents with severe disabilities) should earn nothing. In most societies, however, social security benefits ensure that a basic income is allocated to agents who, for certain reasons, cannot produce as much as they would like to (see, for example, among others, \citet{david2006growth}, and \citet{hanna2018universal}). To account for this reality, we extend our model to incorporate \textit{social justice} or \textit{inclusion}. Generally, social justice includes solidarity and moral principles that individuals have equal access to social rights and opportunities, and it requires consideration beyond talents and skills since some agents have natural limitations, not allowing them to be productive. 

Social justice is incorporated into our model in the form of progressive taxation and redistribution. At any production choice, a positive fraction of output is taxed and shared equally among all agents, and the remaining fraction is allocated according to the principles of market justice. This allocation scheme satisfies the principles of \textit{anonymity} and \textit{local efficiency}, but violates \textit{marginality} and \textit{unproductivity}. Income is redistributed from the high skilled and talented (or more productive agents) to the least well-off. However, the income rank of a free and fair economy (without social justice) is maintained, provided that the entire surplus is not taxed. We generalize each of our results. In particular, a pure strategy equilibrium always exists regardless of the tax rate (Theorem \ref{theo:Egalitarian Shapley value}). Consistent with Theorem \ref{monotonicresult2}, we also find that if the production technology is \textit{strictly monotonic}, there exists a unique equilibrium, and this equilibrium is Pareto-efficient (Corollary \ref{monotonicresultsocialjustice}).

We uncover additional results on the efficiency of economies with social justice. In particular, we find that there exists a tax rate threshold above which there exists a pure strategy Nash equilibrium that is Pareto-efficient, even if the economy is not monotonic (Theorem \ref{theo:Egalitarian Shapley value efficiency}). Moreover, we show that one can always change the reference point of any \textit{non-monotonic} free economy with social justice to guarantee the existence of an equilibrium that is Pareto-efficient (Theorem \ref{res:referencepoint}). This latter finding implies that if a free economy is able to choose its reference point, then it can always do so to induce a Pareto-efficient outcome that is self-enforcing.


We develop various applications of our model to classical and more recent economic problems. In particular, we develop applications to exchange economies \citep{walras1954elements, arrow1954existence, shapley1977example, osborne1994course}, surplus distribution in a firm, self-enforcing lockdown in a networked economy with contagion, and bias in the academic peer-review system \citep{akerlof2020sins}. This variety of applications is possible because we impose no particular assumptions on the structure of action sets, and the action set of each agent may be of a different nature. We start with applying our theory to a production environment where an owner of the firm (or team leader) uses bonuses as a device to incentivize costly labor supply from rational workers. Our analysis shows that in addition to guaranteeing equilibrium existence, the owner can also achieve production efficiency, provided that the costs of labor supply are not too high. Next, we provide an application to contagion in a networked economy in which rational agents freely form and sever bilateral relationships. Rationality is captured by the concept of pairwise-Nash equilibrium, which refines the Nash equilibrium. Using a contagion index \citep{pongouserrano2013}, we show how the costs of a pandemic can induce self-enforcing lockdown. Our application to academic peer-review in the knowledge economy shows that discrimination in the allocation of rewards results in a Pareto-inferior outcome, which indicates that bias reduces the incentive to study ``soft", ``important", and relevant topics in equilibrium.\footnote{See, for example, a recent study by \citet{akerlof2020sins} on the consequences of mostly rewarding ``hard" research topics in the field of economics.} Finally, we recast the model of an exchange economy in our framework, and show that our equilibrium is generally different from the Walrasian equilibrium. This difference is in part explained by the fact that the Walrasian model assumes linear pricing, whereas our model is fully non-parametric.  


The rest of this paper is organized as follows.  Section \ref{sec:preliminary} introduces the model of a free and fair economy. In Section \ref{sec:faireconomyandequilibrium}, we prove the existence of a pure strategy Nash equilibrium in a free and fair economy. Section \ref{sec:efficiency} is devoted to the analysis of efficiency. In Section \ref{sec:socialjustice}, we extend our model to incorporate social justice and inclusion, and we generalize our results. In Section \ref{sec:application}, we present some applications of our analysis. Section \ref{sec:literature} situates our paper in the closely related literature, and  Section \ref{conclusion} concludes. Some proofs are collected in an appendix.  




\section{A free and fair economy: definition, existence and uniqueness}\label{sec:preliminary}
 
\ \ \ In this section, we introduce preliminary definitions and the key concepts of the paper. We then show that there exists a unique economy that is free and fair.
 
 \subsection{A free economy} \label{sec:freeconomy}
 
 \ \ \ A free economy is an economy where agents freely choose their actions and derive utility from their pay. It is modeled as a list $\mathcal{E}= (N, \times_{j\in N} X_j, (o_j)_{j\in N}, f, \phi, (u_j)_{j\in N})$. $N=\{1,2,..., n\}$ is a finite set of agents. Each agent $j$ has a finite set of feasible actions $X_j$. We refer to an action profile $x=(x_j)_{j\in N}$ as an \textit{outcome}, and denote the set $\times_{j\in N} X_j$ of outcomes by $X$. The \textit{reference outcome} (also called \textit{reference point}) is $o=(o_j)_{j\in N}$; it can be interpreted as the \textit{inaction} point, where agents do nothing or do not engage in any sort of transactions with other agents. A production (or surplus) function (also called technology) $f$ transforms any choice $x$ to a real number $f(x) \in \mathbb{R}$, with $f(o)=0$.\footnote{We normalize the surplus at the reference point to $0$ for expositional purposes. It is possible that the surplus realized at $o$ is not zero, and in this case, $f(x)$ should be interpreted as \textit{net surplus} at $x$, that is, the realized surplus at $x$ minus the realized surplus at $o$. We assume the reference $o$ to be exogenously determined.}  We denote by  $P(X) = \{g:X\rightarrow \mathbb{R}, \ \text{with} \ g(o)=0\}$ the set of production functions  on $X$. $\phi: P(X) \times X \rightarrow \mathbb{R}^n$ is a distribution scheme that assigns to each pair $(f, x)$ a payoff vector $\phi(f,x)$. At each input profile $x$, each agent $j$ derives utility $u_j(x)=\phi_j(f,x)$.\footnote{As noted in the Introduction, $u_j(x)$ can be any increasing function of $\phi_j(f,x)$, where the functional form may be different for each agent.}


\vspace{2mm}


\subsection{A free and fair economy} \label{sec:faireconomy}
 \ \ \ A free and fair economy is a free economy $\mathcal{E}= (N, \times_{j\in N} X_j, (o_j)_{j\in N}, f, \phi, (u_j)_{j\in N})$ in which the surplus distribution scheme $\phi$ satisfies elementary principles of market justice. These principles, of long tradition in economic theory, are those of \textit{anonymity}, \textit{local efficiency}, \textit{unproductivity}, and \textit{marginality} stated in the Introduction. These principles are naturally interpreted, but their formalization varies depending on the context. A few preliminary definitions and notations will be needed for their formalization in our setting.
 
 \begin{definition}
 Let $x\in X$ a profile of actions. An outcome $x'\in X$ is a \textit{sub-profile} of $x$ if either $x'=x$ or $[x'_{i} \neq x_i \Longrightarrow x'_i = o_i]$, for $i\in N$. 
 \end{definition}
 
For each $x\in X$, we denote by $\Delta (x)$ the set of sub-profiles of $x$. Given a production function $f\in P(X)$, and an outcome $x\in X$, we define the function $f^x$ as the restriction of $f$ to $\Delta(x)$:
\begin{equation*}
f^x: \Delta(x) \rightarrow \mathbb{R}, \ \text{such that} \ f^x(y)=f(y), \ \text{for each} \ y\in \Delta(x).   
\end{equation*}

\begin{definition}
Let $i \in N$. We define the relation $\Delta_o^{i}$ on $X$ by: \[ [x' \ \Delta_o^{i} \ x] \ \text{if and only if}   \ [x'\in \Delta(x) \ \text{and} \ x'_i=o_i]. \]
\end{definition}



Let  $x\in X$ be an outcome. We denote $\Delta_o^{i}(x)=\{x'\in X: x' \ \Delta_o^{i} \ x\}$, and by $N^x=\{i\in N: x_i \neq o_i\}$ the set of agents whose actions in $x$ are different from their reference points. We also denote $|x| = |N^x|$ the cardinality of $N^x$.

\begin{definition}
Let $f\in P(X)$, $x \in X$, and $x' \in \Delta_o^{i}(x)$. The \textit{marginal contribution} of agent $i$ at a pair $(x', x)$ is: 
\[mc_i(f, x', x)= f(x'_{-i},x_i)-f(x'),\] 
where $(x'_{-i}, x_i)\in X$ is the outcome in which agent $i$ chooses $x_{i}$, and every other agent $j$ chooses $x'_{j}$. 
\end{definition}


\begin{definition}
Let $f \in P(X)$. Agent $i$ is said to be \textit{unproductive} if for each $x\in X$ and all $x'\in \Delta^{i}_{0} (x)$, $mc_i(f, x', x)=0$.
\end{definition}


A permutation $\pi$ of $N$ is a bijection of $N$ into itself. We denote by $\mathcal{S}_n$ the set of permutations of $N$. Let $x\in X$ be a profile of inputs, and let $\pi^x \in \mathcal{S}_n$ be a permutation of $N$ whose restriction to $N\backslash N^x$ is the identity function, that is $\pi^x(i)=i$ for each $i\in N\backslash N^x$. Remark that $\pi^x$ permutes only agents that are active in the profile $x$, and is therefore equivalent to a permutation $\pi^x:N^x\rightarrow N^x$ over $N^x$; we denote by $\mathcal{S}_n^{x}$ the set of such permutations.   

Let $x \in X$, $\pi^x \in \mathcal{S}_n^{x}$, and $y\in \Delta(x)$. We define the profile $\pi^x(y) = (\pi^x_j(y))_{j\in N}$, where
$$\pi^x_j(y)=\left\{ \begin{array}{ll}
    x_j & \text{ if } y_{k}\neq o_{k}, \ j = \pi^x(k) \\
    o_j & \text{ if } y_{k} = o_{k}, \ j = \pi^x(k).
\end{array}\right. $$


We now formalize the principles of market justice below. 
\begin{description}



\item \textbf{Anonymity}. An allocation $\phi$ satisfies $x-$Anonymity if for each $i\in N$ and $\pi^x \in \mathcal{S}_n^{x}$,
\begin{equation*}
\phi_i(\pi^x f^x, x) = \phi_{\pi^x(i)} (f^x, x), \ \text{where} \ \pi^x f^x (y) = f^x(\pi^x(y)), \ \text{for} \ y \in \Delta (x).    
\end{equation*}
The value $\phi$ satisfies Anonymity if $\phi$ satisfies $x-$Anonymity for all $x\in X$. 

\item \textbf{Local Efficiency}. $\sum\limits _{j\in N}\phi_{j}(f,x)= f(x)$ for any $f \in P(X)$ and $x \in X$. 

\item \textbf{Unproductivity.} If agent  $i$ is unproductive, then $\phi_i(f, x) = 0$ for each $f\in P(X)$ and $x \in X$.

\item \textbf{Marginality}. Let $f, g \in P(X)$, and $x$ an outcome. If
\begin{equation*}
mc_i(f,x',x) \geq mc_i(g,x',x) \  \text{for each} \ x' \in \Delta_o^{i}(x)
\end{equation*}
for an agent $i$, then $\phi_{i}(f,x)\geq \phi_{i}(g,x)$. 
\end{description}

These axioms are interpreted naturally. \textit{Anonymity} means that an agent’s pay does not depend on their \textit{name}. It states that every agent is treated the same way by the allocation rule: if two agents exchange their identities, their payoffs will remain unchanged. An important property that is implied by anonymity is \textit{symmetry} (or \textit{non-favoritism}), which means that equally productive agents should receive the same pay. \textit{Local efficiency} simply requires that the surplus resulting from any input choice be fully shared among productive agents participating in the economy. \textit{Unproductivity} means that an agent whose marginal contribution is zero at an input profile should get nothing at that profile. \textit{Marginality} means that, if the adoption of a new technology increases the marginal contribution of an agent, that agent's pay should not be lower under this new technology relative to the old technology. In other words, more productive agents should not earn less compared to less productive agents. Throughout the paper, we abbreviate the four principles as \textbf{ALUM}.

\begin{definition}
A \textbf{free and fair economy} is a free economy $(N, X,o, f, \phi, u)$ such that the distribution scheme $\phi$ satisfies \textbf{ALUM}. 
\end{definition} 

We have the following result.

\begin{proposition}\label{uniqueshapley}
There exists a unique distribution scheme, denoted $\vect{Sh}$, that satisfies \textbf{ALUM}. For  any production function $f \in P(X)$, and any given outcome $x \in X$ and agent $i \in N$:
\begin{equation}
\vect{Sh}_{i}(f,x)=\sum\limits_{x' \in  \Delta_{o}^{i}(x)}\frac{(|x'|)!(|x|-|x'|-1)!}{(|x|)!} \ mc_i(f, x', x).  \label{Shapleyfunction}
\end{equation}
\end{proposition}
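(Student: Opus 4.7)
The plan is to split the proof into two parts: an existence verification that the formula in \eqref{Shapleyfunction} satisfies \textbf{ALUM}, and a uniqueness argument that reduces the problem, for each fixed outcome $x \in X$, to the classical Shapley characterization on the player set $N^x$.

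For existence, I would verify each of the four axioms by direct inspection of \eqref{Shapleyfunction}. \textit{Local Efficiency} follows from a telescoping argument: for each linear ordering of $N^x$, the marginal contributions along the corresponding chain from $o$ to $x$ telescope to $f(x)-f(o)=f(x)$, and averaging over the $|x|!$ orderings reproduces exactly the combinatorial coefficients in \eqref{Shapleyfunction}. \textit{Anonymity} is immediate because these coefficients depend only on the cardinalities $|x'|$ and $|x|$, so relabeling active agents permutes the payoffs accordingly. \textit{Unproductivity} holds because the formula is a linear combination of values $mc_i(f,x',x)$ that vanish identically when $i$ is unproductive. \textit{Marginality} also follows at once from the non-negativity of the coefficients, so pointwise domination of marginal contributions is preserved under summation.

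For uniqueness, I would fix $x \in X$ and reduce to Young's characterization of the Shapley value \citep{young1985monotonic}. The sub-profile lattice $\Delta(x)$ is canonically isomorphic to the Boolean lattice $2^{N^x}$ via $z \mapsto N^z$, and under this identification $f^x$ becomes a classical cooperative game $v_x : 2^{N^x} \to \mathbb{R}$ with $v_x(\emptyset)=0$; $x$-Anonymity translates into symmetry on $N^x$, Local Efficiency into $\sum_{j \in N^x} \phi_j(f,x) = v_x(N^x)$, and Marginality into Young's strong monotonicity. Two preliminary reductions are needed: first, applying Marginality in both directions shows that $\phi_i(f,x)$ depends only on the tuple $\bigl(mc_i(f,x',x)\bigr)_{x' \in \Delta_o^i(x)}$, hence only on $f^x$; and second, for any $i \notin N^x$ all such marginal contributions vanish, so comparing $f$ with the identically-zero production function (under which every agent is globally unproductive) and invoking Unproductivity yields $\phi_i(f,x)=0$. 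Young's theorem applied to $v_x$ on $N^x$ then pins down $\phi_i(f,x)$ for $i \in N^x$ uniquely as the classical Shapley value of $v_x$, which is exactly \eqref{Shapleyfunction}. A self-contained alternative is an induction on the number of non-zero coefficients in the unanimity-game expansion of $f^x$ on $N^x$, where Marginality handles any player lying outside the common support of the non-zero components (stripping off one component does not alter $mc_i$), and Anonymity together with Local Efficiency fix the common payoff of the remaining symmetric players.

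The main obstacle I anticipate is cleanly bridging the gap between the \textit{global} formulation of Unproductivity---which requires agent $i$ to have zero marginal contributions at \textit{every} profile---and the \textit{local} consequence actually needed in the uniqueness argument, namely that $\phi_i(f,x)=0$ whenever the marginal contributions of $i$ vanish at the single profile $x$ under consideration. This bridge is supplied by Marginality applied in both directions against the zero production function, but writing it out without circularity requires establishing first that $\phi(f,x)$ depends only on $f^x$ before Unproductivity is localized to a single outcome.
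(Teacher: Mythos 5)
Your proposal is correct, and its existence half coincides with the paper's sufficiency argument (telescoping over orderings of $N^x$ for Local Efficiency, cardinality-only coefficients for Anonymity, vanishing marginal contributions for Unproductivity, non-negative coefficients for Marginality). For uniqueness, however, you take a genuinely different route: you fix the outcome $x$, identify $\Delta(x)$ with the Boolean lattice $2^{N^x}$, and invoke \citet{young1985monotonic} as a black box, after two preliminary reductions --- (i) two-sided Marginality to show $\phi_i(f,x)$ factors through $f^x$ (which also gives well-definedness of the induced solution on games over $N^x$, and implicitly requires, correctly, that every such game is realizable by some $f\in P(X)$), and (ii) Unproductivity plus two-sided Marginality against the zero production function to kill the payoffs of agents outside $N^x$, so that Local Efficiency over $N$ becomes efficiency over $N^x$. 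The paper instead gives a self-contained proof: it expands $f$ in the basis $\{f_x\}$ of indicator-type production functions (Lemma 2, from \citet{PongouTondji2018}) and runs an induction on the number of non-zero coefficients, using Marginality to strip components not involving agent $i$ and Anonymity plus Local Efficiency to pin down the common payoff of agents in the common support --- exactly the ``self-contained alternative'' you sketch at the end, which is Young's own argument transplanted to this richer domain. Your reduction is shorter and makes explicit a step the paper leaves implicit (that $\phi(\cdot,x)$ depends only on $f^x$), and your upfront treatment of agents outside $N^x$ is cleaner than the paper's case analysis inside the induction; the paper's version buys self-containedness and a uniform treatment of all pairs $(f,x)$ without appealing to the classical theorem. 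Your identification of the global-versus-local Unproductivity gap, and its resolution via comparison with the zero function, is exactly the right fix and mirrors what the paper's $I=0$ base case does.
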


\begin{proof}[Proof of Proposition \ref{uniqueshapley}]
See Appendix.
\end{proof}

Remark that for each agent $i$, the value $\vect{Sh}_i(f,x)$ is interpreted as agent $i$'s average contribution to output $f(x)$. It can be easily shown that the allocation rule $\vect{Sh}$ generalizes the classical Shapley value \citep{shapley1953value}. In fact, to obtain the classical Shapley value, one only has to assume that each agent's action set is the pair $\{0, 1\}$; the classical Shapley value is simply $\vect{Sh}_i(f,x)$ where $x=(1,1,...,1)$, which effectively corresponds to the assumption that the grand coalition is formed. Our setting generalizes the classical environment in three ways. First, it is not necessary to assume that all players have the same action set. Second, the action set of a player may have more than two elements. Third, the value can be computed for any input profile $x$, which effectively means that $\vect{Sh}_i(f,x)$ as a multivariate function of $x$. Our model also generalizes that in \citet{PongouTondji2018} (when the environment is certain), \citet{aguiar2018non}, and \citet{aguiar2020index}. Following these latter studies, we will call $\vect{Sh}$ the \textit{Shapley pay scheme}.  

Below, we illustrate the notion of a free and fair economy, and provide an example of a free economy that is unfair.

\begin{example} \label{introexample}
Consider a small economy $\mathcal{E}=(N, X, o, f, \phi, u)$, where $N=\{1, 2\}$, $X_1= \{a_1, a_2\}$, $X_2=\{b_1,b_2, b_3\}$, $o=(a_1, b_1)$, $X= X_1 \times X_2$, $f$ is given by $f(a_1,b_1)= 0, \ f(a_1,b_2)=5=f(a_1,b_3), \  f(a_2, b_1)=2, \ \text{and} \ f(a_2, b_2)=4=f(a_2,b_3)$, and for each $x \in X$, $\phi (f, x)=u (f, x)$ is given in Table \ref{Tablenfair2} below:

\begin{table}[htb]\hspace*{\fill}%
\begin{game}{2}{3}[Agent 1][Agent 2]
& $b_1$ & $b_2$ & $b_3$\\
$a_1$ &$(0,0)$ &$(0,5)$ &$(0,5)$\\
$a_2$ &$(2,0)$ &$(0.5,3.5)$ &$(0.5,3.5)$
\end{game}\hspace*{\fill}
\caption{A 2-agent free and fair economy}\label{Tablenfair2}
\end{table}

For each of the six payoff vectors presented in Table \ref{Tablenfair2}, the first component represents agent 1's payoff (for example, $u_{1}(f, (a_2, b_1))= 2$) and the second component represents  agent 2's payoff (for instance, $u_{2}(f, (a_2, b_1))= 0$). We can check that for each $x \in X$, $u (f, x) = \phi (f, x)= \vect{Sh}(f,x)$. Therefore, $\mathcal{E}$ is a free and fair economy.

\begin{table}[htb]\hspace*{\fill}%
\begin{game}{2}{3}[Agent 1][Agent 2]
& $b_1$ & $b_2$ & $b_3$\\
$a_1$ &$(0,0)$ &$(2,3)$ &$(3,2)$\\
$a_2$ &$(1,1)$ &$(3,1)$ &$(2,2)$
\end{game}\hspace*{\fill}
\caption{A 2-agent free and unfair economy}\label{Tablenfair1}
\end{table}

Now, we consider another economy $\mathcal{E}'$ with the same characteristics as in $\mathcal{E}$ except for the distribution scheme $\phi$ that is replaced by a new scheme $\psi$ described in Table \ref{Tablenfair1}. In addition to the fact that $\psi \neq \vect{Sh}$, it is straightforward to show that the  distribution $\psi$ violates the \textit{marginality} axiom. Therefore,  $\mathcal{E}'$ is not a free and fair economy. 

One of our goals in this paper is to answer the question of whether \textit{fair principles} guarantee the existence of a pure strategy Nash equilibrium. We can observe that in the free and fair economy described by Table \ref{Tablenfair2}, there are two pure strategy Nash equilibria, which are $(a_2, b_2)$ and $(a_2, b_3)$. However, the modified economy $\mathcal{E}'$ represented by Table \ref{Tablenfair1} admits no equilibrium in pure strategies. In the next section, we will show that fair principles guarantee the existence of a pure strategy Nash equilibrium in a free economy, and when an economy violates these principles, a pure strategy Nash equilibrium may not exist. 
\end{example}

\section{Equilibrium existence in a free and fair economy}\label{sec:faireconomyandequilibrium}

\ \ \ In a free and fair economy, agents make decisions that affect their payoff and the payoffs of other agents. One natural question that therefore arises is whether an \textit{equilibrium} exists. In this section, we first show that a free economy can be modeled as a strategic form game and use the notion of pure strategy Nash equilibrium \citep{nash1951non} to capture incentives and rationality. Our main result is that a free and fair economy always has a pure strategy Nash equilibrium. 

\subsection{A free and fair economy as a strategic form game}\label{fair-strategygames}

\ \ \ A \textit{strategic form game} is a 3-tuple $(N,X,v)$, where $N$ is the set of players, $X= \times_{j \in N} X_j$ is the strategy space, and $v: X \rightarrow \mathbb{R}^n$ is the payoff function. For each $x\in X$, $v_i(x)$ is agent $i$'s payoff at strategy profile $x$, for each $i \in N$. A strategic form game is said to be finite if the set of agents $N$ is finite, and for each agent $i$, the set  $X_i$ of actions is also finite.

A strategy profile $x^{*} \in X$ is a \textit{pure strategy Nash equilibrium} in the game $(N,X,v)$ if and only if for all $i\in N$, $v_i(x^{*}) \geq v_i(x^{*}_{-i},y_i)$, for all $y_i\in X_i$, where $(x^{*}_{-i},y_i)$ is the strategy profile in which agent $i$ chooses $y_{i}$ and every other agent $j$ chooses $x^{*}_{j}$.

A free economy $\mathcal{E}=(N,X, o, f, \phi, u)$ generates a strategic form game $G^{\mathcal{E}}=(N, X, u^{\mathcal{E}})$, where for each $x\in X$ and each $i\in N$, $u_i^{\mathcal{E}}(x)= u_i(f,x)= \phi_i(f,x)$. In the case $\mathcal{E}$ is a free and fair economy, then for each outcome $x$, $\sum \limits_{j\in N} u_j^{\mathcal{E}}(x)= f(x)$ since the distribution scheme $\phi$ satisfies \textit{local efficiency}. For this reason, when $\mathcal{E}$ is a free and fair economy, we may refer to the production function $f$ as the \textit{total utility function} of the strategic form game $G^{\mathcal{E}}$. 

 \begin{definition}
Let $\mathcal{E}=(N, X, o, f, \phi, u)$ be a free economy. A profile  $x^{*} \in X$ is an \textit{equilibrium}  if and only if $x^{*}$ is a pure-strategy Nash equilibrium in the strategic form game $G^{\mathcal{E}}$.
\end{definition}

\subsection{Existence of an equilibrium}\label{sec:existence}

\ \ \ In this section, we state and prove our main result.

\begin{theorem}
\label{theorem: non emptyness: shapley} Any free and fair economy $\mathcal{E}=(N, X, o, f, \phi, u)$  admits an equilibrium.
\end{theorem}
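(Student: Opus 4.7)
The plan is to show that the strategic form game $G^{\mathcal{E}}$ associated with $\mathcal{E}$ is an exact potential game in the sense of Monderer--Shapley. Because $X$ is finite, the potential then attains a maximum, and any maximizer is necessarily a pure-strategy Nash equilibrium of $G^{\mathcal{E}}$, hence an equilibrium of $\mathcal{E}$.

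The first step is to recognize the Shapley pay scheme $\vect{Sh}_i(f,\cdot)$ at each outcome $x$ as the standard Shapley value of an auxiliary cooperative game on the set $N^x$ of agents active at $x$. Define $v^x : 2^{N^x} \to \mathbb{R}$ by $v^x(S) = f(x_S, o_{-S})$. Substituting $S = N^{x'}$ in (\ref{Shapleyfunction}) and noting that $mc_i(f, x', x) = v^x(S \cup \{i\}) - v^x(S)$ show that $\vect{Sh}_i(f, x)$ equals the classical Shapley value of $i$ in $(N^x, v^x)$ for $i \in N^x$, and is $0$ for $i \notin N^x$ by Unproductivity.

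Next I would introduce the potential candidate
\[
\Phi(x) \; = \; \sum_{\emptyset \neq T \subseteq N^x} \frac{d_{v^x}(T)}{|T|}, \qquad \text{where} \quad d_{v^x}(T) = \sum_{R \subseteq T} (-1)^{|T| - |R|} v^x(R)
\]
is the Harsanyi dividend. Two simple observations tie $\Phi$ to payoffs. First, the Harsanyi decomposition of the Shapley value gives $\vect{Sh}_i(f, x) = \sum_{T \subseteq N^x,\, T \ni i} d_{v^x}(T) / |T|$ for $i \in N^x$. Second, when $i \in N^x$ one has $N^{(x_{-i}, o_i)} = N^x \setminus \{i\}$, and $v^{(x_{-i}, o_i)}$ agrees with $v^x$ on all subsets of $N^x \setminus \{i\}$, so the dividends appearing in $\Phi(x_{-i}, o_i)$ are exactly the $d_{v^x}(T)$ with $T \not\ni i$. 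Subtracting therefore isolates the $i$-containing terms, yielding
\[
\vect{Sh}_i(f, x) \; = \; \Phi(x) - \Phi(x_{-i}, o_i) \qquad \text{for every } i \in N \text{ and every } x \in X,
\]
with the cases $i \notin N^x$ being trivial since then $(x_{-i}, o_i) = x$ and $\vect{Sh}_i(f, x) = 0$.

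The final step is a one-line telescoping. For any agent $i$, outcome $x$, and deviation $y_i \in X_i$, applying the identity at $x$ and at $(x_{-i}, y_i)$ gives
\[
\vect{Sh}_i(f, x) - \vect{Sh}_i(f, (x_{-i}, y_i)) \; = \; \bigl[\Phi(x) - \Phi(x_{-i}, o_i)\bigr] - \bigl[\Phi(x_{-i}, y_i) - \Phi(x_{-i}, o_i)\bigr] \; = \; \Phi(x) - \Phi(x_{-i}, y_i),
\]
so $\Phi$ is an exact potential for $G^{\mathcal{E}}$. Since $X$ is finite, $\Phi$ attains its maximum at some $x^* \in X$; no unilateral deviation from $x^*$ can strictly increase $\Phi$, hence none is profitable, and $x^*$ is an equilibrium. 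The main obstacle is the identity $\vect{Sh}_i(f, x) = \Phi(x) - \Phi(x_{-i}, o_i)$: once one verifies that passing from $x$ to $(x_{-i}, o_i)$ kills exactly the Harsanyi dividends of coalitions containing $i$, the rest is routine combinatorial bookkeeping plus the Monderer--Shapley existence statement.
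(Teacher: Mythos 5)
Your proof is correct, but it takes a genuinely different route from the paper. The paper never writes down a potential function: it introduces the notion of a cycle of deviations, decomposes the production function $f$ in a basis of simple functions $\{f_x\}$, shows that in each induced basis game the sum of payoff changes around any cycle is zero, and concludes by linearity that the game generated by a free and fair economy admits no cycle of deviations, hence (being finite) has a pure strategy Nash equilibrium. You instead exhibit an explicit exact potential: you identify $\vect{Sh}_i(f,x)$ with the classical Shapley value of the induced coalitional game $v^x(S)=f(x_S,o_{-S})$ on $N^x$, take $\Phi(x)$ to be the Hart--Mas-Colell potential of $(N^x,v^x)$ written via Harsanyi dividends, verify the key identity $\vect{Sh}_i(f,x)=\Phi(x)-\Phi(x_{-i},o_i)$ using that passing to $(x_{-i},o_i)$ restricts $v^x$ to coalitions not containing $i$ and leaves their dividends unchanged, and then telescope to get the exact-potential property; maximizers of $\Phi$ are equilibria. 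Both arguments ultimately establish that $G^{\mathcal{E}}$ is an exact potential game (the paper via the zero-cycle characterization, you via an explicit potential), but yours buys a concrete potential function, which is constructive and could be reused (for instance to locate equilibria as potential maximizers), while the paper's basis decomposition avoids the dividend machinery and leans only on linearity of the pay scheme in $f$. One small slip: for $i\notin N^x$ the equality $\vect{Sh}_i(f,x)=0$ is not an application of the Unproductivity axiom (such an agent need not be unproductive); it follows directly from formula (\ref{Shapleyfunction}) because every marginal contribution $mc_i(f,x',x)$ vanishes when $x_i=o_i$. This does not affect the validity of the argument.
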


The proof of Theorem \ref{theorem: non emptyness: shapley} uses the concept of a \textit{cycle of deviations} that we introduce below.

\begin{definition}\label{cyle}
Let $G=(N,X,v)$ be a strategic form game and $L^{k}=(x^{1}, x^2, ..., x^{k})$ be a list of outcomes, where each  $x^l \in X$  ($l=1,..., k$) is a pure strategy. The $k$-tuple $L^{k}$ is a \textit{cycle of deviations} if there exist agents $j_{1},...,j_{k}\in N$ such that
\begin{equation*}
x^{l+1}=(x^{l}_{-j_l},x^{l+1}_{j_l}) \ \text{and} \ v_{j_{l}}(x^{l+1})>v_{j_{l}}(x^{l})
\end{equation*}
for each $l=1,...,k$, and $x^{k+1}=x^1$.
\end{definition}

\begin{example} \label{game: game with no cycle of deviation and shapley}
In the strategic form game represented in Table \ref{Table5}, consider the list $L^4 = (x^1, x^2, x^3, x^4)$, where  $x^1=(c,a)$, $x^2=(d,a)$, $x^3=(d,b)$, and $x^4=(c,b)$. 
\begin{table}[htb]\hspace*{\fill}%
\begin{game}{2}{2}[Agent~1][Agent~2]
& $a$ & $b$ \\
$c$ &$(0,4)$ &$(3,0)$ \\
$d$ &$(1,0)$ &$(0,2)$
\end{game}\hspace*{\fill}%
\caption{A 2-agent game that admits a cycle of deviations}\label{Table5}
\end{table}

$L^4$ forms a cycle of deviations. Indeed, agent 1 has an incentive to deviate from $x^1$ to $x^2$. By doing so, agent 1 receives an excess payoff of $1$. Similarly, agent 2 receives an excess payoff of 2 by deviating from $x^2$ to $x^3$. Agent 1 receives an excess payoff of 3 by deviating from $x^3$ to $x^4$; and agent 2 receives an excess payoff of 4 by deviating from $x^4$ to $x^1$. The sum of excess payoffs in the cycle $L^4$ is therefore equal to $10$. 

\begin{table}[htb]\hspace*{\fill}%
\begin{game}{3}{4}[Agent~1][Agent~2]
& $b_1$ & $b_2$ & $b_3$ & $b_4$\\
$a_1$ &$(0,0)$ &$(0,0)$ &$(0,12)$ & $(0,6)$\\
$a_2$ &$(13,0)$ &$(\frac{13}{2}, -\frac{13}{2})$ & $(\frac{3}{2}, \frac{1}{2})$ & $(4,-3)$\\
$a_3$ &$(3,0)$ &$(8,5)$ &$(-1,8)$ & $(-1,2)$
\end{game}\hspace*{\fill}%
\caption{A 2-agent game with Shapley payoffs}
\label{table: game with Shapley payoffs}
\end{table}

In the strategic form game in Table \ref{table: game with Shapley payoffs}, the sum of excess payoffs in any cycle of outcomes equals 0. Therefore, the game does not admit a cycle of deviations. The profile $x^{*}= (a_2, b_3)$ is the only pure strategy Nash equilibrium of the game.
\end{example}

Note that the game in Table \ref{table: game with Shapley payoffs} is generated from a free and fair economy. From Definition \ref{cyle}, a sufficient condition for a finite strategic form game to admit a pure strategy Nash equilibrium is the absence of a cycle of deviations. The sum of excess payoffs in any cycle of deviations has to be strictly positive, as illustrated in Table \ref{Table5} in Example \ref{game: game with no cycle of deviation and shapley}. Such an example of a cycle of deviations can not be constructed in a strategic form game generated from a free and fair economy (see Table \ref{table: game with Shapley payoffs} in Example \ref{game: game with no cycle of deviation and shapley}). We  prove that in a strategic form game generated by a free and fair economy, the sum of excess payoffs in any cycle of deviations equals $0$.

\begin{lemma}\label{lemma:deviation cycle}

Let $\mathcal{E}=(N,X, o, f, \phi, u)$ be a free and fair economy, and $G^{\mathcal{E}}=(N, X, u^{\mathcal{E}})$ the strategic form  game generated by $\mathcal{E}$. Then, the sum of \textit{excess payoffs} in any cycle of deviations in $G^{\mathcal{E}}$ equals $0$. 
\end{lemma}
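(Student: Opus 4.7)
The plan is to exhibit an exact potential function $P:X\to\mathbb{R}$ for the strategic form game $G^{\mathcal{E}}$, and then obtain the lemma by a telescoping argument. By Proposition \ref{uniqueshapley}, the distribution scheme $\phi$ coincides with the Shapley pay scheme $\vect{Sh}$, so $u_{j}^{\mathcal{E}}(x)=\vect{Sh}_{j}(f,x)$. The goal is therefore to construct $P$ satisfying
$$\vect{Sh}_{j}\bigl(f,(x_{-j},y_{j})\bigr)-\vect{Sh}_{j}(f,x) \;=\; P(x_{-j},y_{j}) - P(x) \quad\text{for all } x\in X,\ j\in N,\ y_{j}\in X_{j}.$$
Motivated by the Hart--Mas-Colell potential for the classical Shapley value, I would set $P(o)=0$ and, for each $x\neq o$,
$$P(x) \;=\; \sum_{\substack{y\in\Delta(x)\\ y\neq o}} \frac{(|y|-1)!\,(|x|-|y|)!}{|x|!}\,f(y),$$
which is equivalent to the recursion $P(x)=\tfrac{1}{|x|}\bigl[f(x)+\sum_{j\in N^{x}}P(x_{-j},o_{j})\bigr]$.

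The heart of the argument is the one-player identity $\vect{Sh}_{j}(f,x)=P(x)-P(x_{-j},o_{j})$ for every $j\in N^{x}$. I would establish it by substituting the closed form of $P$ together with the Shapley formula \eqref{Shapleyfunction} and matching coefficients of each $f(y)$, splitting the sub-profiles $y\in\Delta(x)$ according to whether $y_{j}=x_{j}$ or $y_{j}=o_{j}$. The only non-trivial simplification needed is the factorial identity
$$\frac{(|y|-1)!\,(|x|-|y|)!}{|x|!}-\frac{(|y|-1)!\,(|x|-|y|-1)!}{(|x|-1)!}=-\frac{|y|!\,(|x|-|y|-1)!}{|x|!},$$
which precisely reconciles the coefficient produced on the $\Delta_{o}^{j}(x)$ side of the Shapley formula with the one arising from the difference $P(x)-P(x_{-j},o_{j})$. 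This coefficient bookkeeping is the main obstacle; once it is in place, the full exact-potential property follows because both $\vect{Sh}_{j}\bigl(f,(x_{-j},y_{j})\bigr)$ and $\vect{Sh}_{j}(f,x)$ are expressed relative to the common anchor $P(x_{-j},o_{j})$, and the boundary cases $y_{j}=o_{j}$ or $x_{j}=o_{j}$ are handled by the convention that an inactive player's Shapley payoff vanishes.

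Having secured an exact potential $P$, the lemma falls out by telescoping. For any cycle of deviations $L^{k}=(x^{1},\dots,x^{k})$ with deviators $j_{1},\dots,j_{k}$ and cyclic convention $x^{k+1}=x^{1}$, each deviation satisfies $u_{j_{l}}^{\mathcal{E}}(x^{l+1})-u_{j_{l}}^{\mathcal{E}}(x^{l})=P(x^{l+1})-P(x^{l})$, so
$$\sum_{l=1}^{k}\bigl[u_{j_{l}}^{\mathcal{E}}(x^{l+1})-u_{j_{l}}^{\mathcal{E}}(x^{l})\bigr]=\sum_{l=1}^{k}\bigl[P(x^{l+1})-P(x^{l})\bigr]=P(x^{k+1})-P(x^{1})=0,$$
which is exactly the claim. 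As an immediate byproduct, this strategy also reproves Theorem \ref{theorem: non emptyness: shapley}, since any finite exact potential game admits a pure-strategy Nash equilibrium (any maximizer of $P$ on the finite set $X$ is such an equilibrium).
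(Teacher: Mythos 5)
Your proof is correct, but it takes a genuinely different route from the paper. The paper proves Lemma \ref{lemma:deviation cycle} by decomposing the production function into a basis $\{f_x\}$ of ``unanimity-like'' technologies, observing that in each basis game $G^x$ all active agents share the same payoff function (so cycle sums trivially vanish there), and then invoking linearity of the Shapley scheme in $f$; you instead exhibit an explicit Hart--Mas-Colell-style potential $P$ adapted to the multi-action setting and prove the one-player identity $\vect{Sh}_j(f,x)=P(x)-P(x_{-j},o_j)$, from which the exact potential property and the telescoping conclusion follow. Your coefficient bookkeeping is sound: splitting $\Delta(x)$ by whether $y_j=x_j$ or $y_j=o_j$, the stated factorial identity reduces the difference $P(x)-P(x_{-j},o_j)$ exactly to the coefficients $\frac{|x'|!\,(|x|-|x'|-1)!}{|x|!}$ in formula (\ref{Shapleyfunction}) (using $f(o)=0$ to absorb the term $y'=o$), and the boundary case of an inactive deviator is handled since all of that agent's marginal contributions vanish, so the anchor-point argument does deliver the full exact-potential property. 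What each approach buys: the paper's basis argument avoids any factorial computations and leans on symmetry plus linearity, while yours yields the stronger structural statement that $G^{\mathcal{E}}$ is an exact potential game in the sense of Monderer--Shapley, which gives Lemma \ref{lemma:deviation cycle} and Theorem \ref{theorem: non emptyness: shapley} simultaneously (a maximizer of $P$ is a pure equilibrium) and, beyond that, the finite improvement property; in fact, by the Monderer--Shapley characterization of exact potential games via closed improvement paths, the two arguments are two sides of the same equivalence, one verifying the cycle condition directly and the other constructing the potential explicitly. Both, like the paper, rest on Proposition \ref{uniqueshapley} to identify $\phi$ with $\vect{Sh}$.
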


\begin{proof}[Proof of Lemma \ref{lemma:deviation cycle}.]
In this proof, we simply denote the payoff function $u^{\mathcal{E}}$ by $u$. Let $L^k= (x^{1}, x^2, ..., x^{k})$ be a cycle of deviations in the game $G^{\mathcal{E}}$, and let agents $j_{1},...,j_{k}\in N$ be the associated sequences of defeaters. We denote by $S(L^k, u)$ the sum of excess payoffs in the cycle $L^k$: 
\begin{equation*}
S(L^k, u) = u_{j_{k}}(x^{1})-u_{j_{k}}(x^{k})+
\sum_{l=1}^{k-1}[u_{j_{l}}(x^{l+1})-u_{j_{l}}(x^{l})].
\end{equation*}
We show that in the game $G^{\mathcal{E}}$
\begin{equation*}
S(L^k,u)=0.
\label{equation, toProve}
\end{equation*}

For each agent $i \in N$, let $\mathcal{R}_i$ be a total order on the set $X_i$ such that $o_i\mathcal{R}_i x_i$ for all $x_i \in X_i$. For each outcome $x\in X$, define 
\begin{equation*}
f_x(T,y)=\left\{
\begin{array}{l}
\left\vert N^x\right\vert \text{ if }N^x\subseteq T  \text{ and } x_i\mathcal{R}_i y_i  \text{ for all  } i \in N^x\\
0\text{ otherwise}%
\end{array}%
\right. 
\end{equation*}
for all $T\subseteq N$ and $y\in X$. 

We also define  the following production function:
\begin{equation*}
    f_x(z)=f_x(N^z,z) \ \text{for all} \ z\in X.
\end{equation*}
Note that the family $\{f_x, x\in X\backslash \{o\} \}$ forms a basis of the set of production functions on the the same set of players $N$, same set of outcomes $X$, and same reference outcome $o$. Therefore, there exists $(\alpha_x)_{x\in X\backslash\{o\}}$ such that 
\begin{equation}\label{basis}
    f(z)=\sum_{x \in X} \alpha_x f_x(z) \ \text{for all} \ z\in X.
\end{equation}

Furthermore, each $f_x$, $x\in X$, is the total utility function of a strategic form game with Shapley utilities $G^x=(N,X,v^x)$, where for each $i\in N$, $v^x_i$ is given by
\begin{equation*}
    v_i^x(z)=\left\{
\begin{array}{l}
1\text{ if }i\in N^x, \ N^x\subseteq N^z, \ x_j \mathcal{R}_j z_{j}\text{ for all } j \in N^x\\
0\text{ otherwise.}%
\end{array}%
\right. \ \text{for all} \ z \in X.
\end{equation*}

Step 1. We show that the sum of excess payoffs of the cycle $L^k$ equals $0$ in each strategic form game $G^x$. First observe that $v_i^x\equiv 0$ for all $i\notin N^x$, and $v_i^x\equiv v_j^x$ for all $i,j\in N^x$. This means that the sum of excess payoffs in any cycle of the game $G^x$, and in particular in the cycle $L^k$, equals the sum of excess payoffs of any $i\in N^x$, which is obviously $0$.

Step 2. We show that $S(L^k, u)=0$.

Using equation (\ref{basis}),  $f=\sum\limits_{x\in X}\alpha_x f_x$, we have that $u=\sum\limits_{x\in X}\alpha_xv^x$. Given that $S(L^k, v^x)=0$ for each outcome $x$, we can deduce that $S(L^k, u)=0$.
\end{proof}

Now, we derive the proof of Theorem \ref{theorem: non emptyness: shapley}. 

\begin{proof}[Proof of Theorem \ref{theorem: non emptyness: shapley}] From Lemma \ref{lemma:deviation cycle}, the game  $G^{\mathcal{E}}$ admits no cycle of deviations. As $G^{\mathcal{E}}$ is finite, we conclude that $G^{\mathcal{E}}$ admits a pure strategy Nash equilibrium.
\end{proof}

The principles of market justice that define a free and fair economy are only  sufficient conditions for the existence of a pure strategy Nash equilibrium. However, an economy that violates the fair principles may not have a pure strategy Nash equilibrium.



\section{Equilibrium efficiency in a free and fair economy}\label{sec:efficiency}

\ \ \ In Section \ref{sec:existence}, we prove the existence of a pure strategy equilibrium (Theorem \ref{theorem: non emptyness: shapley}) in a free and fair economy. However, there is no guarantee that each  equilibrium is Pareto-efficient. For instance, consider the  strategic form game described in Table \ref{table: game with Shapley payoffs} in Example \ref{game: game with no cycle of deviation and shapley}. The game admits a unique pure strategy Nash equilibrium $x^{*}= (a_2, b_3)$ with $\vect{Sh} (f, x^{*})= (\frac{3}{2}, \frac{1}{2})$. However, the equilibrium $x^{*}$ is Pareto-dominated by the strategy $x =(a_3, b_2)$ with $\vect{Sh} (f, x)= (8, 5)$. Below, we provide two conditions on the production function that address this issue. The first condition---\textit{weak monotonicity}---guarantees the existence of a Pareto-efficient equilibrium in a free and fair economy, and the second condition---\textit{strict monotonicity}---guarantees that there is a unique equilibrium  and that this equilibrium is Pareto-efficient.  Importantly, we also find that in a free economy that is not fair, these monotonicity conditions do not guarantee the existence of an equilibrium that is Pareto-efficient. Before presenting these results, we need some  definitions. 

Let $\mathcal{E}=(N,X, o, f, \phi, u)$ be a free economy, and for $i \in N$, we denote $X_{-i}=\prod\limits_{j=1, \ j\neq i}^{n}X_{j}$. 
\begin{definition}
An order $R$ defined on $X$ is \textit{semi-complete} if for all $i \in N$ and $x_{-i}\in X_{-i}$, the restriction of $R$ to $A_i$ is complete, where $A_i=\{x_{-i}\}\times X_i$.
\end{definition}

\begin{definition}\label{def:monotonicity} $f \in P(X)$ is:
\begin{enumerate}
  \item  \textit{weakly monotonic} if there exists a semi-complete order $R$ on $X$ such that for any $x, y\in X$, if $x \ \textit{R} \ y$, then $f(x)\leq f(y)$.
  \item  \textit{strictly monotonic}  if there exists a semi-complete order $R$ on $X$ such that  for any $x, y\in X$, $[x \ \textit{R} \ y \text{ and }x \neq y]$ implies $f(x)< f(y)$. 
 \end{enumerate}
\end{definition}

\begin{definition}
A free and fair economy $\mathcal{E}=(N,X,o, f, \phi, u)$ is weakly (resp. strictly) monotonic if  $f$ is weakly (resp. strictly) monotonic.
\end{definition}

 We have the following result.
 
\begin{theorem}
\label{monotonicresult2}
A weakly monotonic free and fair economy $\mathcal{E}=(N,X,o, f, \phi, u)$  admits an equilibrium that is Pareto-efficient. If $\mathcal{E}$ is strictly monotonic, then, the equilibrium is unique and Pareto-efficient. 
\end{theorem}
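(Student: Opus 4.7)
The plan is to build on the existence result (Theorem \ref{theorem: non emptyness: shapley}) and on the acyclicity of the Shapley deviation graph (Lemma \ref{lemma:deviation cycle}), and to exploit the monotonic structure imposed on $f$ by the semi-complete order $R$.

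For the weakly monotonic case, I would consider the (non-empty, finite) set $\mathcal{N}$ of pure-strategy Nash equilibria of $G^{\mathcal{E}}$ and select a Pareto-maximal element $x^*$ of $\mathcal{N}$, meaning no other equilibrium Pareto-dominates $x^*$. The goal is then to show that $x^*$ is Pareto-efficient in all of $X$, not only within $\mathcal{N}$. Suppose, toward contradiction, that some $\hat{x}\in X$ strictly Pareto-dominates $x^*$; by local efficiency, $f(\hat{x})>f(x^*)$. Starting from $\hat{x}$ and iteratively applying strictly improving unilateral Shapley deviations, Lemma \ref{lemma:deviation cycle} guarantees that the dynamics terminates at some equilibrium $\tilde{x}$. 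The key step is to argue, using weak monotonicity, that $\tilde{x}$ still Pareto-dominates $x^*$ (and in particular $\tilde{x}\neq x^*$), which contradicts the Pareto-maximality of $x^*$ inside $\mathcal{N}$. The semi-complete order $R$ is used to track how Pareto-dominance over $x^*$ is preserved along each improving step: the deviating agent's Shapley payoff strictly increases, and weak monotonicity of $f$ on each slice bounds the variation of the other agents' payoffs tightly enough to keep the domination intact.

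For the strictly monotonic case, I would refine the same argument to obtain uniqueness. Suppose $x^*$ and $\tilde{x}$ are two distinct equilibria; they must differ at some coordinate $i$. Strict monotonicity says that on each slice $\{x_{-i}\}\times X_i$ there is a unique $f$-maximizer. I would then transfer this rigidity to the Shapley best-response of agent $i$ by decomposing $f$ into the basis $\{f_x\}_{x\in X\setminus\{o\}}$ used in the proof of Lemma \ref{lemma:deviation cycle} and tracking the sign of each summand's contribution to $\vect{Sh}_i$ through equation \eqref{Shapleyfunction}: strict monotonicity forces the sign of each relevant marginal contribution, so agent $i$'s Shapley-best response at $x^*_{-i}$ is uniquely determined. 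Iterating coordinate by coordinate yields $x^*=\tilde{x}$, and Pareto-efficiency of this unique equilibrium follows from the weakly monotonic case.

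The hard part will be translating monotonicity of $f$ into a usable control on the Shapley payoffs $\vect{Sh}_i(f,\cdot)$. By \eqref{Shapleyfunction}, this payoff is a weighted sum of $f$-differences taken across sub-profile slices whose $R$-orientations need not agree with each other, so careful bookkeeping is needed to show both that weak monotonicity preserves Pareto-dominance along a single improving deviation and that strict monotonicity pins down a unique Shapley-best response at each opponent profile. The basis decomposition from Lemma \ref{lemma:deviation cycle} looks like the right technical device to handle this, but extracting a clean Pareto/uniqueness conclusion from it is where the bulk of the work will lie.
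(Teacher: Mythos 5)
Your plan has a genuine gap at exactly the point you flag as ``the hard part,'' and that part is not bookkeeping---it is the entire content of the theorem. In the weakly monotonic case your argument needs the claim that strict Pareto-dominance over the Pareto-maximal equilibrium $x^*$ is preserved along every strictly improving unilateral deviation on the path from $\hat{x}$ to $\tilde{x}$. Weak monotonicity gives you no such control: it constrains the \emph{levels} of $f$ along the $R$-comparable slices, whereas a non-deviating agent's payoff in \eqref{Shapleyfunction} is a weighted sum of \emph{differences} of $f$ across sub-profiles, and these differences can move in either direction when the deviator's coordinate changes. Nothing prevents a non-deviator's Shapley payoff from dropping below its value at $x^*$ after a single improving step, so the induction that is supposed to carry the domination to $\tilde{x}$ does not get started. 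Note also that your route would in effect establish that \emph{every} Pareto-maximal element of the equilibrium set is Pareto-efficient, which is stronger than the theorem and not something monotonicity plausibly delivers; the theorem only needs one well-chosen equilibrium.

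The missing idea, and the paper's actual argument, is a dominance observation that bypasses improvement paths and Lemma \ref{lemma:deviation cycle} entirely. Weak monotonicity provides an $f$-maximizer $\overline{x}$ such that for every agent $i$ and every opponent profile, $(x_{-i},\overline{x}_i)$ sits $R$-above $(x_{-i},x_i)$; hence for each sub-profile $y\in\Delta_o^{i}(x)$ the marginal contribution $f(y_{-i},\overline{x}_i)-f(y)$ is at least $f(y_{-i},x_i)-f(y)$. Since $\vect{Sh}_i(f,\cdot)$ in \eqref{Shapleyfunction} is a nonnegatively weighted sum of these marginal contributions, $\overline{x}_i$ is a weakly dominant strategy for every $i$, so $\overline{x}$ is an equilibrium; it is Pareto-efficient because, by local efficiency, payoffs sum to $f$ and $\overline{x}$ maximizes $f$. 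Under strict monotonicity the same comparison is strict whenever $x_i\neq\overline{x}_i$, so each $\overline{x}_i$ is strictly dominant and the equilibrium is unique. Your strict-case sketch gestures at the right fact (monotonicity pins down the sign of each marginal contribution), but routing it through the basis $\{f_x\}$ from Lemma \ref{lemma:deviation cycle} is the wrong device: the coefficients $\alpha_x$ have arbitrary signs and the decomposition does not preserve the monotone structure summand by summand. Work directly with the marginal contributions in \eqref{Shapleyfunction}; then both the weak and strict statements fall out at once.
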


\begin{proof}[Proof of Theorem  \ref{monotonicresult2}]
The result in Theorem \ref{monotonicresult2}  follows from the fact that each agent $i$'s payoff $\vect{Sh}_i(f, x)$ at $x$ depends only on the marginal contributions $\{f(y_{-i},x_i)-f(y), \ y \in \Delta_o^{i}(x)\}$ of that agent at $x$. Since $f$ is weakly monotonic, the underlying semi-complete relation, say $R$, satisfies the following condition: there exists $\overline{x}\in X$ such that $f$ reaches its maximum at $\overline{x}$, and for all $i\in N$ and  $x_{-i} \in X_{-i}$, we have $x \ R \ (x_{-i},\overline{x}_i)$.  Therefore, each marginal contribution of agent $i$ at a given outcome $x$ is less than or equal to his or her corresponding marginal contribution at the outcome $(x_{-i},\overline{x}_i)$. Given that the Shapley distribution scheme, $\vect{Sh}(f, .)$, is increasing in marginal contributions, agent $i$'s choice $\overline{x}_i$  is a weakly dominant strategy of agent $i$ in the game $G^{\mathcal{E}}$. Therefore, $\overline{x}$ is a Nash equilibrium. The profile $\overline{x}$ is also Pareto-efficient as it maximizes $f$. If $f$ is strictly monotonic, then each $\overline{x}_i$ is strictly dominant and $\overline{x}$ is the unique Nash equilibrium of the game $G^{\mathcal{E}}$.
\end{proof}

Theorem \ref{monotonicresult2} ensures the uniqueness and Pareto-efficiency of the equilibrium in a strictly monotonic free and fair economy. The strategic  form game described in Table \ref{table: game with Shapley payoffs} admits the profile $x^{*}=(a_2,b_3)$  as the only pure strategy Nash equilibrium. However, $x^{*}$ is Pareto-dominated by the profile $x=(a_3,b_2)$, which is not an equilibrium. Such a result can not arise in a strictly monotonic free and fair economy. In addition to providing a condition that guarantees the existence of a Pareto-efficient equilibrium, Theorem \ref{monotonicresult2} also provide a condition that rules out multiplicity of equilibria in the domain of free and fair economies.

In Theorem  \ref{monotonicresult2}, we show that each weakly monotonic free and fair economy admits an equilibrium that is Pareto-efficient. Consider the  strategic form game described in Table \ref{Tableeff} below. The latter is derived from a free and fair economy with the profile $o= (c,a)$ as the reference point. The economy admits two equilibria, namely, outcomes $(c,a)$ and $(d,b)$. The profile $(d,b)$ is Pareto-efficient and it dominates the outcome $(c,a)$.
\begin{table}[htb]\hspace*{\fill}%
\begin{game}{2}{2}[Agent~1][Agent~2]
& $a$ & $b$ \\
$c$ &$(0,0)$ &$(0,0)$ \\
$d$ &$(0,0)$ &$(1, 1)$
\end{game}\hspace*{\fill}%
\caption{A 2-agent free and fair economy with a Pareto-dominated equilibrium}\label{Tableeff}
\end{table}

\begin{table}[htb]\hspace*{\fill}
\begin{game}{2}{2}[Agent~1][Agent~2]
& $b_1$ & $b_2$ \\
$a_1$ &$(0,0)$ &$(2,-1)$ \\
$a_2$ &$(2,0)$ &$(1,2)$
\end{game}\hspace*{\fill}
\caption{A 2-agent strictly monotonic free and unfair economy}\label{Table3}
\end{table}

We relate the existence of an equilibrium that is Pareto-dominated in the free and fair economy described in Table \ref{Tableeff} to the fact that the production function is weakly monotonic. However, it is essential to emphasize that the existence of an equilibrium is due to the fact that the economy is fair and not to the monotonicity property of the technology. For instance, consider a free economy $\mathcal{E}^f$, where agents 1 and 2 have strategies, $X_1=\{a_1, a_2\}$, and $X_2=\{b_1, b_2\}$, and the production function $f$ is given by: $f(a_1,b_1)=0$, $f(a_1,b_2)=1$, $f(a_2,b_1)=2$, and $f(a_2,b_2)=3$. Agents' payoffs  are described in Table \ref{Table3}. The environment $\mathcal{E}^f$ describes a strictly monotonic economy, but it is unfair. Similarly, by replacing the production function $f$ by another function $g$ defined by: $g(a_1,b_1)=0$, $g(a_1,b_2)=g(a_2,b_1)=1$, and $g(a_2,b_2)=3$, we obtain a weakly free monotonic and unfair economy $\mathcal{E}^g$ with agents' payoffs described in Table \ref{Table3bis}.

\begin{table}[htb]\hspace*{\fill}
\begin{game}{2}{2}[Agent~1][Agent~2]
& $b_1$ & $b_2$ \\
$a_1$ &$(0,0)$ &$(2,-1)$ \\
$a_2$ &$(2,-1)$ &$(1,2)$
\end{game}\hspace*{\fill}
\caption{A 2-agent weakly monotonic free and unfair economy}\label{Table3bis}
\end{table}

Note also that neither strategic form game  $G^{\mathcal{E}^f}$ described in Table \ref{Table3}, nor  $G^{\mathcal{E}^g}$ described in Table \ref{Table3bis} admit a pure strategy Nash equilibrium. This shows that the monotonicity conditions do not guarantee the existence of a pure strategy Nash in a free economy that is unfair; and even when an equilibrium exists in such an economy, it may be Pareto-inefficient. This latter situation occurs, for example, in the prisoner's dilemma game. An economy that is represented by a prisoner's dilemma game is monotonic, but its unique equilibrium is Pareto-inefficient (see, for instance, the game described in Table \ref{PDG}; the unique pure strategy Nash equilibrium (Defect, Defect) is Pareto-inefficient).  
\begin{table}[htb]\hspace*{\fill}
\begin{game}{2}{2}[Agent~1][Agent~2]
& Cooperate & Defect \\
Cooperate &$(0, 0)$ &$(-2, 1)$ \\
Defect &$(1, -2)$ &$(-1, -1)$
\end{game}\hspace*{\fill}
\caption{A prisoner's dilemma game}\label{PDG}
\end{table}



\section{A free economy with social justice and inclusion}\label{sec:socialjustice}

\ \ \ Our conception of a free economy with social justice embodies both the ideals of market justice and social inclusion. Members of a society do not generally have the same abilities. Consequently, distribution schemes that are based on market justice alone will penalize individuals with less opportunities or those who are unable to develop a positive productivity to the economy.

One of the goals of social justice is to remedy this social disadvantage that results mainly from arbitrary factors in the sense of moral thought. Social justice requires caring for the least well-off and those who have natural limitations not allowing them to achieve as much as they would like to. This requirement goes beyond the considerations of a free and fair economy in which agents have equal access to civic rights, wealth, opportunities, and privileges. The ideal of social justice could be implemented in a fair society through specific redistribution rules, and that is the main message that we intend to provide in this section. 

Market justice as defined in the previous sections requires that the collective outcome must be distributed based on individual marginal contributions. Thus, a citizen who is not able to contribute a positive value to the economy shouldn't receive a positive payoff. 

Social justice differs to market justice in the sense that everyone should receive a basic worth for living. This principle is consistent with the results found by \citet{de2013fairness} in a recent experimental study in which neutral agents (called ``Decision Makers") are called upon to distribute collective rewards among other agents (called ``Recipients"). They show that even if collective rewards depend on complementarity and substitutability between recipients, some decision markers still allocate  positive rewards to those who bring nothing to the economy. Moreover, a linear convex combination of the Shapley value \citep{shapley1953value} and the \textit{equal split} scheme  arises as a one-parameter allocation estimate of data. This convex allocation is also known as an \textit{egalitarian Shapley value} \citep{joosten1996dynamics}. Intuitively, this pay scheme can be viewed as implementing a progressive redistribution policy where a positive amount of the total surplus in an economy is taxed and redistributed equally among all the agents. We use this distribution scheme to showcase our purpose. We will see that some properties of an economy that embeds the idea of social justice depends on the tax rate. Below, we define the \textit{equal-split}, and an \textit{egalitarian Shapley value} schemes. 

\begin{definition}\label{equal-egalitarianrules}
Let $\mathcal{E}=(N, X,o, f, \phi, u)$ be a free economy.
\begin{enumerate}
    \item $\phi$ is the \textit{equal split} distribution scheme, if
    \begin{equation*}
       \phi_i(f,x)=\frac{f(x)}{n}, \ \text{for all} \ f\in P(X), x\in X, \ \text{and} \ i \in N.
    \end{equation*}
    
    \item $\phi$ is an \textit{egalitarian Shapley value} if there exists $\alpha \in [0,1]$ such that for all $f\in P(X)$, and  $i \in N$,
\begin{equation*}
    \phi_i(f,x)=\alpha \cdot \vect{Sh}_i(f,x)+(1-\alpha)\cdot \frac{f(x)}{n}, \ \text{for all} \ x \in X.
\end{equation*}
\end{enumerate}
\end{definition}
We denote by $\vect{ES}^\alpha$ the egalitarian Shapley value associated to a given $\alpha\in [0,1]$. The mixing equal split and Shapley value satisfies the principles of anonymity and local efficiency, but violates marginality and unproductivity when $\alpha \in [0, 1)$. The allocation scheme  $\vect{ES}^\alpha$ has a very natural interpretation.  Given an outcome $x$, the technology $f$ produces the output $f(x)$. A share ($\alpha$) of the latter is shared among agents according to their marginal contributions, while the remaining ($1-\alpha$) is shared equally among the entire population; the fraction $1-\alpha$ is the tax rate. Immediately, those who are more talented will still receive more under a given egalitarian Shapley value scheme, but less compared to what they receive in a free and fair economy (when $\alpha = 1$). Additionally, those who do not have the opportunity to contribute to their optimum scale will still be rewarded. We have the following definition.

\begin{definition}\label{equal-egalitarianrules}
$\mathcal{E}=(N, X,o, f, \phi, u)$ is a \textit{free economy with social justice} if there exists $\alpha \in [0,1[$ such that $\phi = \vect{ES}^{\alpha}$. We call $\mathcal{E}^{\alpha}=(N, X,o, f, \vect{ES}^{\alpha}, u)$ an $\alpha$-free economy with social justice. 
\end{definition}

In Section \ref{sec:efficiencysocialjustice}, we analyze equilibrium existence and Pareto-efficiency in free economies with social justice. Our methodology is similar to the one followed in Sections \ref{sec:faireconomyandequilibrium} and \ref{sec:efficiency}. In Section \ref{sec:referencepoint}, we prove that an economy can always choose its reference point to induce equilibrium efficiency, even when the economy is not monotonic.

\subsection{Equilibrium existence and efficiency in a free economy with social justice}\label{sec:efficiencysocialjustice}

In what follows, we study the existence of equilibrium in  an $\alpha$-free economy with social justice. As defined in Section \ref{fair-strategygames}, a free economy with social justice admits an equilibrium if the strategic  form game derived from that economy possesses a pure strategy Nash equilibrium. A meritocratic planner will choose a higher $\alpha$ when allocating resources since talents and merits have more value in such a society.  An egalitarian planner will put a higher weight on equal distribution. It follows that a choice of $\alpha$ reveals a trade-off between market justice and egalitarianism. The good news is that there exists a self-enforcing social contract irrespective of the size of $\alpha$. We have the result hereunder. 

\begin{theorem}
\label{theo:Egalitarian Shapley value}
Any $\alpha$-free economy with social justice $\mathcal{E}^{\alpha}=(N, X, o,f, \vect{ES}^\alpha, u)$ admits an equilibrium.
\end{theorem}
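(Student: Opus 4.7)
The plan is to mirror the strategy used in the proof of Theorem \ref{theorem: non emptyness: shapley}: establish that the strategic form game $G^{\mathcal{E}^{\alpha}}$ generated by $\mathcal{E}^{\alpha}$ contains no cycle of deviations, and then invoke finiteness of $G^{\mathcal{E}^{\alpha}}$ to conclude the existence of a pure strategy Nash equilibrium. The key new ingredient is the linear decomposition of the egalitarian Shapley value, $\vect{ES}^{\alpha}_{i}(f,x) = \alpha \cdot \vect{Sh}_{i}(f,x) + \frac{1-\alpha}{n} f(x)$, which lets me split any excess-payoff computation into a Shapley piece and an equal-split piece.

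Concretely, suppose toward a contradiction that $L^{k}=(x^{1}, \ldots, x^{k})$ is a cycle of deviations in $G^{\mathcal{E}^{\alpha}}$ with defeaters $j_{1}, \ldots, j_{k}$. Writing $S(L^{k}, \cdot)$ for the sum-of-excess-payoffs functional introduced in the proof of Lemma \ref{lemma:deviation cycle}, linearity gives
\begin{equation*}
S(L^{k}, u^{\mathcal{E}^{\alpha}}) \;=\; \alpha \cdot S\bigl(L^{k}, \vect{Sh}(f, \cdot)\bigr) \;+\; \frac{1-\alpha}{n} \sum_{l=1}^{k} \bigl[f(x^{l+1}) - f(x^{l})\bigr].
\end{equation*}
The second sum telescopes and vanishes, since $x^{k+1}=x^{1}$. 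For the first term I would invoke Lemma \ref{lemma:deviation cycle} to conclude $S(L^{k}, \vect{Sh}(f,\cdot)) = 0$. Therefore $S(L^{k}, u^{\mathcal{E}^{\alpha}}) = 0$, which contradicts the fact that in a cycle of deviations each individual excess payoff is strictly positive. Hence $G^{\mathcal{E}^{\alpha}}$ admits no cycle of deviations and, being finite, possesses a pure strategy Nash equilibrium.

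The main subtlety to flag is that a cycle of deviations in $G^{\mathcal{E}^{\alpha}}$ need not itself be a cycle of deviations in the free and fair game $G^{\mathcal{E}}$ built from the same $f$: a move that is strictly profitable under $u^{\mathcal{E}^{\alpha}}$ may fail to be strictly profitable under $\vect{Sh}$. Consequently, Lemma \ref{lemma:deviation cycle} cannot be cited as a black box here; one has to observe that its proof never actually uses strict profitability. The basis-game decomposition and the telescoping step in Step 1 of that proof require only the closed-loop structure $x^{l+1} = (x^{l}_{-j_{l}}, x^{l+1}_{j_{l}})$ with $x^{k+1}=x^{1}$, so $S(L^{k}, \vect{Sh}(f,\cdot)) = 0$ in fact holds for every such loop of one-coordinate changes, whether or not the defeaters strictly gain. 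With this slight strengthening of Lemma \ref{lemma:deviation cycle} in hand, the argument above goes through without modification and delivers the desired equilibrium for every $\alpha \in [0,1)$.
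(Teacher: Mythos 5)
Your proposal is correct and follows essentially the same route as the paper's own proof: decompose $\vect{ES}^{\alpha}$ linearly into the Shapley part and the equal-split part, note that the equal-split contribution telescopes to zero around any closed loop while the Shapley contribution vanishes by the argument behind Lemma \ref{lemma:deviation cycle}, and conclude from the absence of cycles of deviations plus finiteness. Your added observation---that Lemma \ref{lemma:deviation cycle} must be applied to loops of unilateral moves that need not be strictly profitable under $\vect{Sh}$, and that its proof never uses strict profitability---is a careful and valid patch of a step the paper leaves implicit.
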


\begin{proof}[Proof of Theorem \ref{theo:Egalitarian Shapley value}]
Consider $\alpha \in [0,1]$ such that $\phi = ES^{\alpha}$. In the proof of Theorem \ref{theorem: non emptyness: shapley}, we show that the sum of excess payoffs in any cycle of deviations from any strategic  form game derived from a fair economy equals $0$. The same result holds for any strategic form game derived from an $\alpha$-free economy with social justice, since an egalitarian Shapley value is a linear combination of the Shapley value and equal division. Thus, we conclude the proof.
\end{proof}

We also provide a condition under which a free economy with social justice has a Pareto-efficient economy. We have the following definition. 
\begin{definition}
Let $\mathcal{E}^{\alpha}=(N, X, o,f, \vect{ES}^{\alpha}, u)$  be an $\alpha$-free economy with social justice. An \textit{optimal outcome} is any outcome $x\in \arg \max\limits_{y \in X}f(y)$ at which $f$ is maximized. 
\end{definition}

The following result is deduced from Theorem \ref{monotonicresult2}. 

\begin{cor} \label{monotonicresultsocialjustice}
A weakly monotonic $\alpha$-free economy with social justice $\mathcal{E}^{\alpha}=(N, X, o,f, \vect{ES}^{\alpha}, u)$  admits an equilibrium that is Pareto-efficient. If $f$ is strictly monotonic, then, the equilibrium is unique and Pareto-efficient. 
\end{cor}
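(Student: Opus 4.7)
The plan is to derive the corollary directly from Theorem \ref{monotonicresult2} by exploiting the linear structure of the egalitarian Shapley value. By definition,
\[
\vect{ES}^{\alpha}_i(f, x) = \alpha \cdot \vect{Sh}_i(f, x) + (1-\alpha)\cdot \frac{f(x)}{n},
\]
so each agent's utility in $G^{\mathcal{E}^{\alpha}}$ is a convex combination of their Shapley payoff and a per-capita share of total output. The strategy is to show that the profile $\overline{x}$ identified in the proof of Theorem \ref{monotonicresult2} remains (weakly) dominant for every agent under $\vect{ES}^{\alpha}$, hence yields a Pareto-efficient equilibrium of $G^{\mathcal{E}^{\alpha}}$.

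First, I would invoke the argument in the proof of Theorem \ref{monotonicresult2} to obtain a profile $\overline{x} \in \arg \max_{x \in X} f(x)$ and a semi-complete order $R$ witnessing weak monotonicity of $f$ such that, for every $i \in N$ and every $x_{-i} \in X_{-i}$, the component $\overline{x}_i$ is weakly dominant under the Shapley pay scheme, namely $\vect{Sh}_i(f, (x_{-i}, \overline{x}_i)) \geq \vect{Sh}_i(f, (x_{-i}, y_i))$ for all $y_i \in X_i$. Because $R$ is semi-complete and $\overline{x}$ is the underlying maximizer of $f$, the weak-monotonicity condition also yields $f((x_{-i}, \overline{x}_i)) \geq f((x_{-i}, y_i))$, so $\overline{x}_i$ is likewise weakly dominant under the pure equal-split rule $f(\cdot)/n$.

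Taking the convex combination of these two inequalities with weights $\alpha$ and $1-\alpha$ then gives $\vect{ES}^{\alpha}_i(f, (x_{-i}, \overline{x}_i)) \geq \vect{ES}^{\alpha}_i(f, (x_{-i}, y_i))$ for every $i$, $x_{-i}$, and $y_i$. Hence $\overline{x}_i$ is weakly dominant in $G^{\mathcal{E}^{\alpha}}$, making $\overline{x}$ a pure-strategy Nash equilibrium. Its Pareto-efficiency follows from the fact that $\vect{ES}^{\alpha}$ satisfies local efficiency (inherited from $\vect{Sh}$ and the equal-split scheme), so the sum of agents' utilities equals $f$ at every profile, and $\overline{x}$ maximizes $f$ by construction.

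For the strict-monotonicity case, the strict versions of both inequalities apply: Theorem \ref{monotonicresult2} gives $\vect{Sh}_i(f, (x_{-i}, \overline{x}_i)) > \vect{Sh}_i(f, (x_{-i}, y_i))$ whenever $y_i \neq \overline{x}_i$, and strict monotonicity of $f$ gives the analogous strict inequality for the equal-split term. Since $1-\alpha > 0$ for $\alpha \in [0, 1)$, the equal-split component alone already delivers strictness in the convex combination (with the Shapley component adding further strictness when $\alpha > 0$), so $\overline{x}_i$ is strictly dominant and $\overline{x}$ is the unique Nash equilibrium. The main obstacle is really just bookkeeping: one must verify that the same $\overline{x}$ simultaneously witnesses dominance for both summands of $\vect{ES}^{\alpha}$, which the linearity of the pay scheme in $(\vect{Sh}, f/n)$ makes essentially automatic.
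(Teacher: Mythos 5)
Your proposal is correct and follows essentially the paper's own route: the paper simply states that the proof of Corollary \ref{monotonicresultsocialjustice} is ``similar to that of Theorem \ref{monotonicresult2},'' and your argument---reusing the maximizing profile $\overline{x}$, establishing its weak (resp.\ strict) dominance separately for the Shapley and equal-split components, and combining them through the linearity of $\vect{ES}^{\alpha}$ with $1-\alpha>0$---is precisely that argument written out in full.
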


The proof of Corollary \ref{monotonicresultsocialjustice} is similar to that of Theorem \ref{monotonicresult2}. Next, we provide an additional result about Pareto-efficiency of equilibria in a free economy with social justice.

\begin{theorem}\label{theo:Egalitarian Shapley value efficiency}
There exists $\alpha_0 \in (0,1)$ such that for all $\alpha \in [0,\alpha_0]$, the $\alpha$-free economy with social justice $\mathcal{E}^{\alpha}=(N, X, o,f, \vect{ES}^\alpha, u)$ admits an  equilibrium that is Pareto-efficient. 
\end{theorem}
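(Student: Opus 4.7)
Plan.  My strategy is to locate an equilibrium of $\mathcal{E}^\alpha$ inside the set $M=\argmax_{y\in X}f(y)$ of optimal outcomes.  Any $\bar{x}\in M$ is automatically Pareto-efficient under $\vect{ES}^\alpha$: this scheme inherits local efficiency from the Shapley value and from equal split, so $\sum_i u_i^\alpha(x)=f(x)$ and a strict Pareto improvement over $\bar{x}$ would force a strict increase of $f$ beyond its maximum.  The task therefore reduces to exhibiting $\bar{x}\in M$ that is also a pure strategy Nash equilibrium of $G^{\mathcal{E}^\alpha}$.

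Step 1 (threshold).  If $f$ is constant then $M=X$ and every outcome is an equilibrium for any $\alpha_0\in(0,1)$.  Otherwise set $\delta=\min\{f(x)-f(y)\colon x,y\in X,\ f(x)>f(y)\}>0$ and $B=\max_{i,x,y}|\vect{Sh}_i(f,x)-\vect{Sh}_i(f,y)|$, and choose $\alpha_0=(\delta/n)/(B+\delta/n)\in(0,1)$ when $B>0$ (any $\alpha_0\in(0,1)$ works when $B=0$).  A short bound on $u_i^\alpha$ then shows that for every $\alpha\in[0,\alpha_0]$, every $\bar{x}\in M$, every agent $i$, and every $y_i\in X_i$ with $(\bar{x}_{-i},y_i)\notin M$,
\begin{equation*}
u_i^\alpha(\bar{x}_{-i},y_i)-u_i^\alpha(\bar{x})\;\leq\;\alpha B-(1-\alpha)\delta/n\;\leq\;0,
\end{equation*}
so no deviation \emph{out of} $M$ is profitable.

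Step 2 (better-response dynamic restricted to $M$).  Starting from an arbitrary $\bar{x}^0\in M$, I would iterate: at $\bar{x}^k$, if some profitable unilateral deviation exists in $G^{\mathcal{E}^\alpha}$, pick one and call the result $\bar{x}^{k+1}$.  Step 1 forces $\bar{x}^{k+1}\in M$, so $f(\bar{x}^k)=f(\bar{x}^{k+1})$, the equal-split term cancels, and the deviation must strictly raise the Shapley payoff $\vect{Sh}_{i_k}$ of the deviating agent in the pure Shapley game $G^{\mathcal{E}}$.  Finiteness of $M$ rules out non-termination, because a non-terminating path would revisit an outcome and produce a cycle of deviations in $G^{\mathcal{E}}$ with strictly positive sum of excess Shapley payoffs, contradicting Lemma~\ref{lemma:deviation cycle}.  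The endpoint $\bar{x}^*\in M$ is, by construction, a pure strategy Nash equilibrium of $G^{\mathcal{E}^\alpha}$, and being in $M$ it is Pareto-efficient.

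Main obstacle.  The delicate part is the \emph{within}-$M$ deviations that the threshold $\alpha_0$ alone does not preclude: for $\alpha>0$, an agent may strictly prefer to swap to another $f$-maximizer that grants her a larger Shapley share.  Lemma~\ref{lemma:deviation cycle} is the crucial tool for disposing of these, converting cycle-freeness of the pure Shapley game into termination of the restricted dynamic.  This substitutes for a direct continuity-in-$\alpha$ argument---which is awkward because the relevant payoff inequalities are not strict when several outcomes coincide in $f$-value---and delivers a Pareto-efficient equilibrium inside $M$.
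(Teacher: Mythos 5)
Your proposal is correct and follows essentially the same route as the paper's proof: choose $\alpha$ small enough that no deviation from an $f$-maximizer leaves the set of optimal outcomes, then invoke the cycle-freeness of deviations (Lemma \ref{lemma:deviation cycle}) to show a better-response path within that set must terminate at a Nash equilibrium, which is Pareto-efficient by local efficiency. Your version merely makes explicit what the paper leaves implicit (the threshold $\alpha_0=(\delta/n)/(B+\delta/n)$ and the restricted better-response dynamic), which is a welcome sharpening but not a different argument.
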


\begin{proof}[Proof of Theorem \ref{theo:Egalitarian Shapley value efficiency}]
Assume that $\alpha$ is sufficiently  small. If $f$ admits a unique optimal outcome $x$, then $x$ is a pure strategy Nash equilibrium of the game generated by any $\alpha$-free economy with social justice $\mathcal{E}^{\alpha}$. In the case $f$ admits two or more optimal outcomes, then, for strictly positive but sufficiently small $\alpha$, no agent has any incentive to deviate from an optimal outcome to a non-optimal outcome. As games generated by $\alpha$-free economies with social justice do not admit cycles of deviations, it is not possible to construct any cycle of deviations within the set of optimal outcomes. It follows that at least one optimal outcome is a Nash equilibrium. The latter profile is also Pareto-efficient as it maximizes the sum of agents' payoffs. 
\end{proof}

\begin{example}[Taxation and Social Justice]
Consider a small economy involving three agents, $N=\{1, 2, 3\}$, who live in three different states or regions in a given country. One can assume that each agent is the ``typical" representative of each state. Agents face different occupational choices. Agent 1 can decide to stay unemployed (strategy ``$a$"), work in a middle class job (strategy ``$b$") that provides an annual salary of \$188,000, or accumulate experience to land a higher skilled job (strategy ``$c$") that pays an annual salary of \$200,000. Agent 2 can only choose between strategies ``$a$" and ``$b$". For many reasons including health concerns, natural disasters such as hurricane, pandemics or wildfire, or civil war violence, agent 3 does not have the opportunities available to other agents; he or she can not work, and is therefore considered as unemployed. The government uses marginal tax rates to determine the amount of income tax that each agent must pay to the tax collector. The aggregate annual fiscal revenue function $f$ for the economy depends on agents' strategies and it is described as follows: $f(a,a,a)=0$, $f(a,b,a)= \$41,175.5$, $f(b,a,a)= \$41,175.5$, $f(b,b,a)=\$82,351$, $f(c,a,a)=\$45,015.5$, and $f(c,b,a)=\$86,191$. Numerous countries over the world use marginal tax brackets to collect income taxes (see, for example, a report by \citet{Bunn2019} for the Organisation for Economic Co-operation (OECD) and the Development and European Union (EU) countries). The function $f$ is a simplified version of such fiscal revenue rules. With the tax revenue collected, the government provides public goods.  However, the type of public investment received by an agent's state depends on the agent's marginal contribution to the aggregate annual fiscal revenue. Using the Shapley scheme $\phi = \vect{Sh}$ in the distribution of public investments yields the outcome $x^{*} = (c, b, a)$ as the unique pure strategy Nash equilibrium in this free and fair economy. At this equilibrium, the state of agent 1 receives a public good that is worth \$45,015.5, agent 2's state receives a public investment of \$41.175.5, and agent 3's state receives nothing. However, if the egalitarian Shapley scheme $\phi = \vect{ES}^{4/5}$ is used instead to redistribute the fiscal revenue, then $x^{*} = (c, b, a)$ is still the unique pure strategy Nash equilibrium in the free economy with social justice. In that case, the outcome $x^{*}$ is still Pareto-efficient and the ranking of the size of investment across states does not change. Agent 3's state receives a public investment of \$5,746, agent 2's state receives \$38,686.5, and agent 1's state receives \$41,758.5.  Although the allocation $\vect{ES}^{4/5} (f, x^{*}) = (\$41,758.5, \$38,686.5, \$5,746)$ might not be the ``best" decision for some people living in that society, it is a significant improvement (at least for agent 3's state) from the market allocation $\vect{Sh}(f, x^{*})=(\$ 45,015.5, \$ 41,175.5, 0)$. 
\end{example}

Using Theorem \ref{theo:Egalitarian Shapley value efficiency}, we deduce the following corollary. 

\begin{cor}\label{prop:positiverewards}
Let $\mathcal{E}^{\alpha}=(N, X, o,f, \vect{ES}^\alpha, u)$  be an $\alpha$-free economy with social justice. Assume that $f$ only takes non-negative values. Then, each agent receives a non-negative payoff at any equilibrium.  
\end{cor}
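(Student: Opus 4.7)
The plan is to exploit the Nash equilibrium condition by considering, for each agent $i$, the unilateral deviation to her reference action $o_i$. The key observation is that whenever agent $i$ plays $o_i$ in an outcome $y$, her Shapley component vanishes, so her payoff reduces to a pure equal-split share of $f(y)$, which is non-negative by hypothesis. The equilibrium condition then forces her equilibrium payoff to be at least this non-negative quantity.

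The first step is to establish the \emph{reference-action null property}: $\vect{Sh}_i(f, y) = 0$ for every outcome $y$ with $y_i = o_i$. Reading the formula (\ref{Shapleyfunction}), the sum runs over $y' \in \Delta_o^i(y)$, so $y'_i = o_i$. Since also $y_i = o_i$, we have $(y'_{-i}, y_i) = (y'_{-i}, o_i) = y'$, hence every marginal contribution satisfies $mc_i(f, y', y) = f(y'_{-i}, y_i) - f(y') = f(y') - f(y') = 0$. Every term in the Shapley sum vanishes, yielding $\vect{Sh}_i(f, y) = 0$. Substituting into the egalitarian Shapley formula,
\begin{equation*}
u_i(y) \;=\; \vect{ES}^{\alpha}_i(f, y) \;=\; \alpha \cdot 0 + (1-\alpha)\,\frac{f(y)}{n} \;=\; (1-\alpha)\,\frac{f(y)}{n} \;\geq\; 0,
\end{equation*}
using $f \geq 0$ and $\alpha \in [0,1)$.

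The second step applies existence and the equilibrium condition. By Theorem \ref{theo:Egalitarian Shapley value} (and, in the Pareto-efficient refinement, by Theorem \ref{theo:Egalitarian Shapley value efficiency}), the economy $\mathcal{E}^{\alpha}$ admits at least one equilibrium. Let $x^{*} \in X$ be any equilibrium. For each $i \in N$, the Nash property applied to the deviation $o_i$ yields
\begin{equation*}
u_i(x^{*}) \;\geq\; u_i(x^{*}_{-i}, o_i).
\end{equation*}
Invoking the null property above at $y = (x^{*}_{-i}, o_i)$ gives $u_i(x^{*}_{-i}, o_i) = (1-\alpha)\, f(x^{*}_{-i}, o_i)/n \geq 0$, and combining these two inequalities produces $u_i(x^{*}) \geq 0$ for every agent $i$.

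The only real obstacle is the null-payoff identity in the first step: it is a short computation, but it depends on carefully unpacking the definitions of $\Delta_o^i$ and $mc_i$ when the agent's own coordinate already coincides with the reference $o_i$. Once that identity is secured, the corollary is an immediate consequence of the single-agent deviation condition in the definition of Nash equilibrium, together with the non-negativity hypothesis on $f$.
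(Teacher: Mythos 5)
Your proof is correct and follows essentially the same route as the paper: the paper's argument is precisely that deviating to the reference action $o_i$ yields a non-negative payoff (since the marginal contributions, hence the Shapley component, vanish when $y_i = o_i$ and the equal-split share of $f \geq 0$ is non-negative), so the Nash condition forces $u_i(x^{*}) \geq 0$. You merely make explicit the vanishing of $\vect{Sh}_i$ at the reference action, which the paper leaves implicit.
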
 

The intuition behind Corollary \ref{prop:positiverewards} is straightforward. Assuming that at a given outcome $x \in X$, $f(x)$ is non-negative, then for all $i\in N$, agent $i$'s payoff is non-negative if instead of choosing $x_i$, the agent chooses the reference point $o_i$. 

\subsection{Choosing a reference point to achieve equilibrium efficiency}\label{sec:referencepoint} 

So far, we have assumed that the reference point $o$ is exogenously determined and that in a free economy,  the surplus function $f$ is such that $f(o_1, o_2, ..., o_n)=0$. As noted earlier, this latter point is just a simplifying normalization. We have also shown that in a free and fair economy, all the equilibria may be Pareto-inefficient, especially in the absence of monotonicity. Similarly, in a free economy with social justice, if the tax rate ($1-\alpha$) is too small, a Pareto-efficient equilibrium may not exist either. This section shows that we can achieve equilibrium efficiency simply by changing the reference point of any free and fair economy or any free economy with social justice.

Without loss of generality, we assume that  $f(o)$ is strictly positive and modify the Shapley distribution scheme such that for $i\in N$, and $x\in X$, agent $i$'s payoff at $(f, x)$, denoted $\overline{\vect{Sh}} (f,x )$, is given by  $\overline{\vect{Sh}} (f,x ) = \vect{Sh}_i(f-f(o),x)+\frac{f(o)}{n}$. Let us denote $\overline{P}(X)= \{g:X\rightarrow \mathbb{R}, \ \text{with} \ g(o)>0\}$. Our next result says that any optimal outcome can be achieved via an equilibrium profile in any $\alpha$-free economy with social justice endowed with the distribution scheme $\overline{\vect{ES}}^{\alpha}$, where $\overline{\vect{ES}}^{\alpha} (f,x)=\alpha \cdot \overline{\vect{Sh}}_i(f,x)+(1-\alpha)\cdot \frac{f(x)}{n}, \ \text{for all} \ x \in X$ and $f\in \overline{P}(X)$.

\begin{theorem}\label{res:referencepoint}
For all free economy $\mathcal{E}^{\alpha} (o)=(N,X, o, f, \overline{\vect{ES}}^{\alpha}, u)$, there exists another reference outcome $o'$ such that the $\alpha$-free economy $\mathcal{E}^{\alpha}(o')=(N,X, o', f, \overline{\vect{ES}}^{\alpha}, u)$ admits an optimal equilibrium $x^{*}$.
\end{theorem}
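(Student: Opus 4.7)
The plan is to take any optimal outcome $x^{*} \in \argmax_{y \in X} f(y)$ and set $o' := x^{*}$. Since $f(o) > 0$ and $f(x^{*}) \geq f(o)$, we have $f(o') > 0$, so $\overline{\vect{ES}}^{\alpha}$ remains well-defined with reference point $o'$. I claim that $x^{*}$ is itself a pure-strategy Nash equilibrium in the induced game $G^{\mathcal{E}^{\alpha}(o')}$. The intuition is that, once an optimal outcome is declared the reference, no deviation can generate more surplus than is already present at $o'$, and the equal-split component together with the additive correction $f(o')/n$ absorbs enough of any shortfall that unilateral deviations become unprofitable.

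The verification rests on two degenerate Shapley computations. With reference $o' = x^{*}$, the active set at $x^{*}$ is $N^{x^{*}} = \emptyset$, so $\vect{Sh}_{j}(f - f(x^{*}), x^{*}) = 0$ for every $j \in N$, yielding $\overline{\vect{ES}}^{\alpha}_{j}(f, x^{*}) = f(x^{*})/n$. For any unilateral deviation $y = (x^{*}_{-i}, y_{i})$ with $y_{i} \neq x^{*}_{i}$, the active set is $N^{y} = \{i\}$, so $\Delta_{o'}^{i}(y) = \{o'\}$ and the Shapley formula collapses to the single term
\[
\vect{Sh}_{i}(f - f(x^{*}), y) \;=\; f(y) - f(x^{*}),
\]
while $\vect{Sh}_{j}(f - f(x^{*}), y) = 0$ for $j \neq i$. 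Substituting into the definition of $\overline{\vect{ES}}^{\alpha}$ gives
\[
\overline{\vect{ES}}^{\alpha}_{i}(f, y) \;=\; \alpha\bigl[f(y) - f(x^{*})\bigr] + \alpha\cdot\tfrac{f(x^{*})}{n} + (1-\alpha)\cdot\tfrac{f(y)}{n}.
\]

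The remaining step is elementary algebra: subtracting $f(x^{*})/n$ and simplifying yields
\[
\overline{\vect{ES}}^{\alpha}_{i}(f, y) - \overline{\vect{ES}}^{\alpha}_{i}(f, x^{*}) \;=\; \frac{\alpha(n-1)+1}{n}\bigl[f(y) - f(x^{*})\bigr] \;\leq\; 0,
\]
since $\alpha(n-1)+1 > 0$ and $f(y) \leq f(x^{*})$ by optimality of $x^{*}$. Hence no agent can strictly improve by a unilateral deviation from $x^{*}$, which makes $x^{*}$ a pure-strategy Nash equilibrium of $G^{\mathcal{E}^{\alpha}(o')}$ and, by construction, surplus-maximizing and Pareto-efficient. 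I expect the main obstacle to be the careful bookkeeping of the two degenerate Shapley computations (at $x = o'$ where $|x| = 0$ and at a singleton-active-set deviation where $|x| = 1$); once these are handled, the rest is routine.
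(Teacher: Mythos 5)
Your proposal is correct and takes essentially the same route as the paper: set the new reference point $o'$ equal to an $f$-maximizing outcome and check that no unilateral deviation from $o'$ is profitable, using exactly the single-term Shapley computation $\alpha\bigl[f(o'_{-i},y_i)-f(o')+\tfrac{f(o')}{n}\bigr]+(1-\alpha)\tfrac{f(o'_{-i},y_i)}{n}\leq \tfrac{f(o')}{n}$ that appears in the paper's proof. Your version merely carries out the algebra explicitly and notes the degenerate cases $|x|=0$ and $|x|=1$, which is a fine (slightly more detailed) presentation of the same argument.
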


\begin{proof}[Proof of Theorem \ref{res:referencepoint}]
Assume $\alpha=1$. Let $o^\prime$ be a profile of inputs such that  $f(o^\prime)= \max \limits_{x\in X}f(x)$. No agent has any strict incentive to deviate from $o'$. Indeed if agent $i$ deviates and chooses $x_i$, then agent $i$ is the only active agent at the new outcome $(o^\prime_{-i}, x_i)$. As each inactive agent receives $\frac{f(o^\prime)}{n}$ at $(o^\prime_{-i},x_i)$, and $f(o^\prime)$ maximizes the production, it follows from  the local efficiency axiom of the Shapley distribution scheme that the deviation $x_i$ is not strictly profitable. A similar argument holds for any other $\alpha \in [0,1)$. Indeed, at the profile $(o^\prime_{-i},x_i)$, agent $i$ receives $\alpha\left(f(o^\prime_{-i},x_i)-f(o^\prime)+\frac{f(o^\prime)}{n}\right)+(1-\alpha)\frac{f(o^\prime_{-i},x_i)}{n}$, which is less than $\frac{f(o^\prime)}{n}$.
\end{proof}

Remark that this result holds for any value of $\alpha$, including for $\alpha=1$, which corresponds to a situation where the tax rate is zero. In that case, the entire surplus of the economy is distributed following the Shapley value. The analysis implies that if an economy can choose its reference point, it can always do so to lead to equilibrium efficiency.   

\section{Some applications}\label{sec:application}

There a wide variety of applications of our theory. In this section, we provide applications to the distribution of surplus in a firm, exchange economies, self-enforcing lockdowns in a networked economy facing a pandemic, and publication bias in the academic peer-review system.

\subsection{Teamwork: surplus distribution in a firm}
\ \ \ In this
first application, we use our theory to show how bonuses can be distributed among workers in a way that incentivizes them to work efficiently.

Consider a firm which consists of a finite set of workers $N=\{1, 2, ..., n\}$. Each worker $i \in N$ privately and freely chooses an effort level $x^j_i\in X_i$, and bears a corresponding non-negative cost $c^j_i = c(x^j_i)$, where $c(.)$ denotes the cost function. The cost of labor supply includes any private  resources or extra working time that worker $i$ puts into the project (for example, transportation costs, time, etcetera). Workers' labor supply choices are made simultaneously and independently. The owner of the firm (or the team leader) knows the cost associated to each effort level. At each effort profile $x=(x_1,\cdots,x_n)$, a corresponding monetary output $F(x)$ is produced. A fraction of the monetary output, $f=\gamma \cdot F$, with $\gamma \in (0, 1)$, is redistributed to workers in terms of bonuses. 

\vspace{2mm}

The existence of a pure strategy Nash equilibrium in this teamwork game follows from Lemma \ref{lemma:deviation cycle}. To see this, observe that the payoff function of a worker can be decomposed in two parts: the bonus that is determined by the Shapley payoff and the cost function. Lemma \ref{lemma:deviation cycle} shows that the sum of excess payoffs in any cycle of deviations equals 0 in any free and fair economy (or any strategic game with Shapley payoffs). The reader can check that the sum of excess costs in any cycle of strategy profiles is zero as well in the game. The latter implies that the sum of excess payoffs in any cycle of strategy profiles of the teamwork game is equal to 0. Therefore, the teamwork game admits no cycle of deviations. As the game is finite, we conclude that it admits at least a Nash equilibrium profile in pure strategies. (Recall that the total output of the firm, $F$, and the total bonus, $f$, are perfectly correlated.) We should point out that a pure strategy Nash equilibrium always exists in the teamwork game, even if costs are high. In the latter case, some workers, if not all, might find it optimal to remain inactive at the equilibrium. In such a situation, the owner might want to raise the total bonus to be redistributed to workers. 
\vspace{1mm}

\textit{Illustration}. We now provide a numerical example with two workers called Bettina and Diana. Bettina has four possible effort levels: $b_1,\ b_2,\ b_3$ and $b_4$; and Diana has four possible effort levels as well: $d_1,\ d_2, \ d_3$ and $d_4$. The cost functions of the two workers are given by: $c(b_1)=c(d_1)=0,\ c(b_2)=c(b_3)=c(d_2)=c(d_3)=4$, $c(b_4)=3$, and $c(d_4)=5$. The fraction  $f$ of the output to be redistributed as bonus is described in  Table \ref{table: Total bonus}. The number $f(b,d)$ is the bonus to be distributed at the profile of efforts $(b,d)$; for instance, $f(b_1, d_1)= 0$. \\

\begin{table}[h]\hspace*{\fill}%
\begin{game}{4}{4}[Bettina][Diana]
& $d_1$ & $d_2$ & $d_3$ & $d_4$\\
$b_1$ &$0$ &$5$ &$1$ & $13$\\
$b_2$ &$2$ &$8$ & $10$ & $2$\\
$b_3$ &$5$ &$13$ &$1$ & $13$\\
$b_4$ &$3$ &$9$ & $13$ & $2$\\
\end{game}\hspace*{\fill}%
\caption{Total bonus function in a teamwork game}
\label{table: Total bonus}
\end{table}
The corresponding Shapley payoffs are described in Table \ref{table: Shapley payoffs} and the net payoffs of Bettina and Diana in the teamwork game are described in Table \ref{table: Teamwork game}.

\begin{table}[h!]\hspace*{\fill}%
\begin{game}{4}{4}[Bettina][Diana]
& $d_1$ & $d_2$ & $d_3$ & $d_4$\\
$b_1$ &$(0,0)$ &$(0,5)$ &$(0,1)$ & $(0,13)$\\
$b_2$ &$(2,0)$ &$(\frac{5}{2},\frac{11}{2})$ & $(\frac{11}{2},\frac{9}{2})$ & $(-\frac{9}{2},\frac{13}{2})$\\
$b_3$ &$(5,0)$ &$(\frac{13}{2},\frac{13}{2})$ &$(\frac{5}{2},-\frac{3}{2})$ & $(\frac{5}{2},\frac{21}{2})$\\
$b_4$ &$(3,0)$ &$(\frac{7}{2},\frac{11}{2})$ & $(\frac{15}{2},\frac{11}{2})$ & $(-4,6)$\\
\end{game}\hspace*{\fill}%
\caption{Shapley payoffs: redistribution of total bonus in a teamwork game}
\label{table: Shapley payoffs}
\end{table}

\begin{table}[h!]\hspace*{\fill}%
\begin{game}{4}{4}[Bettina][Diana]
& $d_1$ & $d_2$ & $d_3$ & $d_4$\\
$b_1$ &$(0,0)$ &$(0,1)$ &$(0,-3)$ & $(0,8)$\\
$b_2$ &$(-2,0)$ &$(-\frac{3}{2},\frac{3}{2})$ & $(\frac{3}{2},\frac{1}{2})$ & $(-\frac{17}{2},\frac{3}{2})$\\
$b_3$ &$(1,0)$ &$(\frac{5}{2},\frac{5}{2})$ &$(-\frac{3}{2},-\frac{11}{2})$ & $(-\frac{3}{2},\frac{11}{2})$\\
$b_4$ &$(0,0)$ &$(\frac{1}{2},\frac{3}{2})$ & $(\frac{9}{2},\frac{3}{2})$ & $(-7,1)$\\
\end{game}\hspace*{\fill}%
\caption{Bettina and Diana's net payoffs in a teamwork game}
\label{table: Teamwork game}
\end{table}

The profile $(b_4,d_3)$ is a pure strategy Nash equilibrium. Therefore, the owner of the firm can implement the profile \textbf{$(b_4,d_3)$} without any need of monitoring the actions of Bettina and Diana, as $(b_4,d_3)$ is self-enforcing. The owner can implement the profile \textbf{$(b_1,d_4)$} as well. Note that the set of equilibrium effort profiles depend on the cost functions, and that no worker receives a non positive bonus at the equilibrium. The reason is that each worker $i$ always has the option to remain inactive, which is equivalent to Bettina choosing $b_1$ or Diana choosing $d_1$ in this illustration. The two equilibria in this teamwork game are Pareto-efficient.

\subsection{Contagion and self-enforcing lockdown in a networked economy} 

\ \ \ In this section, we provide an application of a free and fair economy to contagion and self-enforcing lockdown in a networked economy. We show how the costs of a pandemic from a virus outbreak can affect agents' decisions to form and sever bilateral relationships in the economy. Specifically, we illustrate this application by using the contagion potential of a network \citep{pongou2010economics, pongouserrano2013, pongou2016volume, PongouTondji2018}.

\vspace{2mm}

Consider an economy $\mathcal{M}$ involving agents who freely form and sever bilateral links according to their preferences. Agents' choices lead to a network, defined as a set of bilateral links. Assume that rational behavior is captured by a certain equilibrium notion (for example, Nash equilibrium, pairwise-Nash equilibrium, etc.). Such an economy may have multiple equilibria. Denote by $\mathcal{E(M)}$ the set of its equilibria. Our main goal is to assess agent's decisions in response to the spread of a random infection (for example, COVID-19) that might hit the economy. As the pandemic evolves in the economy, will some agents decide to sever existing links and self-isolate themselves? How does network structure depend on the infection cost? 

\vspace{2mm}

To illustrate these concepts, consider an economy involving a finite set of agents $N=\{1, ..., n\}$. All agents  simultaneously announce the direct links they wish to form. For every agent $i$, the set of strategies is an $n$-tuple of 0 and 1, $X_{i} = \left\{0, 1\right\}^n$. Let $x_i= (x_{i1}, ..., x_{ii-1}, 1, x_{ii+1},..., x_{in})$ be an element in $X_i$. Let $x_{ij}$ denote the $j$th coordinate of $x_i$. Then, $x_{ij}=1$ if and only if $i$ chooses a direct link with $j \ (j \neq i)$, or $j=i$ (and thus $x_{ij} = 0$, otherwise).  We assume that the formation of a link requires mutual consent, that is, a link $ij$ is formed in a network if and only if $x_{ij} x_{ji} = 1$. We denote $X=\times_{j\in N} X_j$. An outcome $x \in X$ yields a unique network $g(x)$. However, a network can be formed from multiple outcomes. We denote $o= (0, .., 0)$ the reference outcome, and $g(o)$ the empty network. It follows that the networked economy $\mathcal{M}$ can be represented by a free economy $(N, X, o, f, \phi,u)$, where $f$ is the production function and $u=\phi$ the payoff function (see below).

\vspace{2mm}

Assume that rationality is captured by the notion of pairwise-Nash equilibrium as defined by, among others, \citet{calvo2004job}, \citet{goyal2006unequal}, and \citet{bloch2007formation}. The concept of pairwise-Nash equilibrium refines Nash equilibrium building upon the pairwise stability concept in \citet{jackson1996strategic}. Pairwise-equilibrium networks are such that no agent gains by reshaping the current configuration of links, neither by adding a new link nor by severing any subset of the existing links. Let $g$ be a network and $ij \in g$ a link. We let $g+ij$ denote the network found by adding the link $ij$ to $g$, and $g-ij$ denote the network obtained by deleting the link $ij$ from $g$. Formally, $g$ is a pairwise-Nash equilibrium network if and only if there exists a Nash equilibrium outcome $x^{*}$ that supports $g$, that is $g=g(x^{*})$, and for all $ij \notin g$, $\phi_i(f, g +ij) > \phi_i(f, g)$ implies $\phi_j(f, g +ij) < \phi_j(f, g)$.

\vspace{2mm}

The contagion function is the contagion potential of a network \citep{pongou2010economics, pongouserrano2013, pongou2016volume, PongouTondji2018}. To define this function, we consider a network $g$ that has $k$ components, where a component is a maximal set of agents who are directly or indirectly connected in $g$; and $n_{j}$ the number of individuals in the $j^{th}$ component $(1\leq j\leq k)$. \citet{pongou2010economics} shows that if a random agent is infected with a virus, and if that agent infects his or her partners who also infect their other partners and so on, the fraction of infected agents is given by the contagion potential of $g$, which is: 
\begin{equation*}
\mathcal{P}(g)=\frac{1}{n^{2}}\sum\limits_{j=1}^{k}n_{j}^{2}.  
\end{equation*}
However, in a network $g$, each agent is exogenously infected with probability $\frac{1}{n}$, and given that agents are not responsible for exogenous infections, the part of contagion for which agents are collectively responsible in $g$ is: 
\begin{equation*}
 \Tilde{c}(g)= \mathcal{P}(g)-\frac{1}{n}.   
\end{equation*}
We assume that the infection by a communicable virus leads to a disease outbreak in the economy. Measures that are implemented to fight the pandemic bring economic costs to society. To assess those costs, we assume that the collective contagion function $\Tilde{c}$ generates a pandemic cost function $\mathcal{C}$ so that, for each network $g$:
\begin{equation*}
\mathcal{C}(g)= F(\Tilde{c}(g)), \ F \ \text{being a well-defined function}. 
\end{equation*}
The pandemic and network formation affect economic activities. The formation of a network $g$ brings an economic value $v(g) \in \mathbb{R}$ to the economy. Given the cost function $\mathcal{C}$, the economic surplus of a network $g$ is:
\begin{equation*}
f(g)= v(g)- \mathcal{C}(g).
\end{equation*}
Our main goal is to examine each agent's behavior in forming or severing bilateral links as the pandemic spreads in the economy. Let $g$ be a network and $S$ be a set of agents. We denote by $g^S$ the restriction of the network $g$ to $S$. This restriction is obtained by severing all the links involving agents in $N \backslash S$. Also, let $i$ be an agent. We denote by $g^S+i$ the network $g^{S\cup \left\{i\right\}}$ obtained from $g^S$ by connecting $i$ to all the agents in $S$ to whom $i$ is connected in the network $g$. The structure of the networked economy provides a natural setting for the use of the Shapley distribution scheme. In a competitive environment where marginal contributions are the only inputs that matter in the economy, we can expect that an agent who adds no value to any network configuration receives no payoff, and a more productive agent in a network structure receives a payoff that is greater relative to that of less productive agents. Assuming that the output from individual contributions are entirely shared among agents, it becomes natural to consider that agent $i$'s payoff in a network $g$ is given by the Shapley distribution scheme (\ref{Shapleyfunction}):
\begin{equation*}
\phi_i(f, g) \equiv \vect{Sh}_i(f, g) = \sum \limits_{S \subseteq N, \ i\neq S} \frac{s! (n-s-1)!}{n!} \ \left\{ f (g^S +i) - f(g^S) \right\}, \ s= |S|.
\end{equation*}

\vspace{2mm}

The networked economy $\mathcal{M}= (N, X, o, f, \vect{Sh},u)$ describes a free and fair economy. We have the following result. 

\begin{proposition}
Pairwise-Nash equilibrium networks always exist: $\mathcal{E(M)} \neq \emptyset$. 
\end{proposition}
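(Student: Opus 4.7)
The plan is to leverage Theorem~\ref{theorem: non emptyness: shapley} directly: the networked economy $\mathcal{M}=(N,X,o,f,\vect{Sh},u)$ is a free and fair economy (its distribution scheme is the Shapley pay scheme), so the induced strategic form game $G^{\mathcal{M}}$ admits at least one pure strategy Nash equilibrium, and any such equilibrium $x^{*}$ delivers a network $g^{*}=g(x^{*})$ satisfying the Nash-support clause in the definition of pairwise-Nash. What remains is to enforce the pairwise link-addition clause, and for this I would sharpen the selection of $x^{*}$.

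Because Lemma~\ref{lemma:deviation cycle} shows that the sum of excess payoffs along every deviation cycle in $G^{\mathcal{M}}$ equals zero, the Monderer--Shapley characterization implies that $G^{\mathcal{M}}$ is an exact potential game; let $P$ denote a potential. Over the non-empty, finite set of pure strategy Nash equilibria I would lexicographically maximize the pair $(P(x),|g(x)|)$, calling the selected outcome $x^{*}$. To verify pairwise stability of $g^{*}=g(x^{*})$, I would suppose for contradiction that there exist $i,j\in N$ with $ij\notin g^{*}$, $\phi_i(f,g^{*}+ij)>\phi_i(f,g^{*})$, and $\phi_j(f,g^{*}+ij)\ge\phi_j(f,g^{*})$, and let $x'$ coincide with $x^{*}$ except at the coordinates $x'_{ij}=x'_{ji}=1$, so that $g(x')=g^{*}+ij$. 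I would then decompose $x^{*}\to x'$ into two successive unilateral changes and use the exact potential identity along each step to compare $P(x')$ with $P(x^{*})$.

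The main obstacle is the mutual-consent requirement for link formation: neither $i$ nor $j$ alone creates the new link, so a solitary intermediate step can leave $g^{*}$ and every Shapley payoff untouched. I would resolve this through a case analysis on $(x^{*}_{ij},x^{*}_{ji})\in\{(0,0),(1,0),(0,1)\}$. When $x^{*}_{ij}=0$ and $x^{*}_{ji}=1$, agent $i$ can unilaterally set $x_{ij}=1$ and strictly improve his payoff to $\phi_i(f,g^{*}+ij)$, directly contradicting Nash. When $x^{*}_{ij}=1$ and $x^{*}_{ji}=0$, the Nash property of $x^{*}$ forces $\phi_j(f,g^{*}+ij)=\phi_j(f,g^{*})$ (otherwise $j$ has a strictly profitable unilateral deviation by switching $x_{ji}$ to $1$); hence $P(x')=P(x^{*})$ while $|g(x')|=|g^{*}|+1$, contradicting lex-maximality. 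When $x^{*}_{ij}=x^{*}_{ji}=0$, decomposing $x^{*}\to x'$ with the $j$-step first makes that step payoff-neutral (the network stays $g^{*}$) and the subsequent $i$-step a strict improvement for $i$, so $P(x')>P(x^{*})$, again contradicting lex-maximality. All three cases produce contradictions, so $g^{*}$ is pairwise-Nash and $\mathcal{E}(\mathcal{M})\neq\emptyset$.
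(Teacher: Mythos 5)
Since the paper leaves this proof to the reader, there is no printed argument to compare against; judged on its own terms, your potential-game route is the natural one and can be made to work (the payoffs here are the Shapley payoffs of the link-restricted game, so an exact potential does exist, e.g.\ the Hart--Mas-Colell potential of $S\mapsto f(g(x)^S)$). But as written there are two gaps. First, Lemma~\ref{lemma:deviation cycle} as stated only says that the excess payoffs sum to zero along cycles of \emph{strictly improving} deviations, and that property alone does not yield an exact potential via Monderer--Shapley: a game with no improving cycles at all satisfies it vacuously without being a potential game. What you need is the zero-sum property along arbitrary closed paths of unilateral moves; the proof of the lemma (the decomposition $u=\sum_x\alpha_x v^x$ into team-like games, which never uses strictness of the deviations) does deliver this, so you should invoke that argument, or construct the potential directly, rather than the lemma's statement plus Monderer--Shapley.

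Second, your contradictions in the $(1,0)$ and $(0,0)$ cases appeal to lexicographic maximality of $(P(x),|g(x)|)$ over the set of Nash equilibria, but the comparison profile $x'$ is not known to be a Nash equilibrium, so neither $P(x')>P(x^*)$ nor ``equal potential with one more link'' contradicts your selection as stated. The patch is short: since some global maximizer of $P$ is a Nash equilibrium, your selection forces $P(x^*)=\max_{x\in X}P(x)$; then in the $(0,0)$ case $P(x')>P(x^*)$ is outright impossible, and in the $(1,0)$ case $P(x')=\max_X P$ makes $x'$ itself a Nash equilibrium with strictly more links, restoring the lexicographic contradiction. Cleaner still, lexicographically maximize $(P(x),|g(x)|)$ over all of $X$: maximality of the first coordinate already makes $x^*$ a Nash equilibrium, and all three cases then close exactly as you argue; the $(0,1)$ case is fine as written. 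With these two repairs the proof is complete.
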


This result partly follows from Theorem \ref{theorem: non emptyness: shapley}, but is stronger because the notion of pairwise-Nash equilibrium refines the Nash equilibrium. The proof is left to the reader. We illustrate it below.

\textit{Illustration}. Let $N=\{1,2,3\}$. Assume the set of an agent $i$’s direct links in a network $g$ is $L_i(g)= \left\{jk \in g: j=i \ \text{or} \ k=i, \ \text{and} \ j\neq k\right\}$, of size $l_i(g)$. The size of $g$ is $l(g)=\sum \limits_{i\in N} l_i(g)/2$. Note that $l(g)=0$ if and only if $g$ is the empty network. For illustration, we assume that for each  network $g$: 
\begin{equation*} \label{eq1}
\begin{split}
v(g) & = [l(g)]^{1/2} \\
\mathcal{C}(g) & = \lambda \Tilde{c} (g) = \lambda [\mathcal{P}(g)-\frac{1}{n}], \ \lambda>0 \\
f(g) & = [l(g)]^{1/2}- \lambda [\mathcal{P}(g)-\frac{1}{n}], \lambda>0.
\end{split}
\end{equation*}
We can rewrite $f$ as follows (note that $\mathcal{P}(\emptyset) = \frac{1}{n})$: 
\begin{align*} 
 f(g) &= 
 \begin{cases} 
   0 & \text{if } l(g) = 0 \\
   1-\frac{2 \lambda}{9} & \text{if } l(g) =1\\
    \sqrt{2}-\frac{2 \lambda}{3} & \text{if } l(g) =2\\
    \sqrt{3}-\frac{2 \lambda}{3} & \text{if } l(g) =3
  \end{cases} 
\end{align*}
Given that there is only three agents, we can fully represent the set of networks in $\mathcal{M}$. The agents are labeled as described in Figure \ref{agentlabel}. 
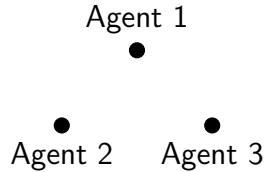
\begin{figure}[!h]
\centering
\begin{tikzpicture}[scale=1]
    \filldraw[color=black] (0,0) circle (0.1) ;
    \draw (0,-0.08) node[below]{Agent 2};
    \filldraw[color=black] (2,0) circle (0.1);
    \draw (2,-0.08) node[below]{Agent 3};
   \filldraw[color=black] (1,1) circle (0.1);
    \draw (1, 1.08) node[above]{Agent 1}; 
    \end{tikzpicture}
    \caption{Disposition of agents in a network} \label{agentlabel}
\end{figure}
In Figure \ref{networkset}, we display the different network configurations in $\mathcal{M}$. In each network, the payoff of each agent is given next to  the corresponding node. The pairwise stability concept facilitates the search of equilibrium networks. We have the following result. We denote by $g^N$ the complete network.
\begin{proposition}\label{eqnetwork}
Let $g$ be a network. If:
\begin{enumerate}
    \item $\lambda < 1.8 \sqrt{2}-0.9$, then $\mathcal{E(M)} = \{g^N\}$.
    \item $1.8 \sqrt{2}-0.9 < \lambda < \frac{3\sqrt{3}}{2}$, then $g \in \mathcal{E(M)}$ if and only if $l(g) \in \{1, 3\}$. 
    \item $\frac{3\sqrt{3}}{2} < \lambda < 4.5$, then $g \in \mathcal{E(M)}$ if and only if $l(g) =1$.
    \item $\lambda > 4.5$, then $\mathcal{E(M)} = \{g(o)\}$.
\end{enumerate}
\end{proposition}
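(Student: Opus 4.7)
The plan is to enumerate the networks on $N=\{1,2,3\}$ up to relabeling (the empty network $g(o)$, the one-link networks, the two-link ``paths'' with one central and two peripheral agents, and the complete network $g^N$), compute the Shapley payoffs $\vect{Sh}_i(f,g)$ at each configuration from formula~\eqref{Shapleyfunction}, and then determine for which values of $\lambda$ each configuration is pairwise-Nash. Recall that pairwise-Nash requires both Nash stability (no agent strictly gains by unilaterally changing his announcement $x_i\in\{0,1\}^n$, which can simultaneously cut any subset of his existing links) and pairwise stability (for every missing link, forming it fails to strictly benefit at least one of its two endpoints).

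The explicit payoffs I would derive from~\eqref{Shapleyfunction} are the following: every agent gets $0$ in $g(o)$; in a one-link network the two endpoints each receive $(9-2\lambda)/18$ while the isolated agent gets $0$; in a two-link path the central agent receives $(9+9\sqrt{2}-8\lambda)/27$ and each peripheral agent receives $(-9+18\sqrt{2}-10\lambda)/54$; and in $g^N$ each agent receives $(3\sqrt{3}-2\lambda)/9$. From these, the threshold inequalities that organize the proof are as follows. First, forming a link in $g(o)$ strictly benefits both endpoints iff $\lambda<9/2$, so $g(o)$ is pairwise-Nash iff $\lambda\geq 9/2$. Second, in a one-link network, severing the link is strictly profitable iff $\lambda>9/2$, and forming a second link is strictly profitable for both of its endpoints iff $(-9+18\sqrt{2}-10\lambda)/54>0$, i.e.\ iff $\lambda<1.8\sqrt{2}-0.9$, so one-link networks are pairwise-Nash exactly when $1.8\sqrt{2}-0.9\leq\lambda\leq 9/2$. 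Third, in a two-link path a peripheral agent strictly prefers to cut his link and become isolated whenever $\lambda>1.8\sqrt{2}-0.9$, while on the complementary range the third (missing) link strictly benefits both of its endpoints (the gain equals $(9+18(\sqrt{3}-\sqrt{2})-2\lambda)/54$, which is positive throughout $\lambda\leq 1.8\sqrt{2}-0.9$), so two-link paths are never pairwise-Nash. Fourth, in $g^N$ the binding unilateral deviation of an agent is to cut both of his links and become isolated, which is strictly profitable iff $(3\sqrt{3}-2\lambda)/9<0$, i.e.\ iff $\lambda>3\sqrt{3}/2$; cutting only one link requires the larger threshold $\lambda>9/2+9(\sqrt{3}-\sqrt{2})$, so $g^N$ is pairwise-Nash iff $\lambda\leq 3\sqrt{3}/2$.

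Intersecting these four characterizations across the four open intervals $(-\infty,1.8\sqrt{2}-0.9)$, $(1.8\sqrt{2}-0.9,\,3\sqrt{3}/2)$, $(3\sqrt{3}/2,\,9/2)$, and $(9/2,\infty)$ produces exactly the four cases of the proposition. The main obstacle is computational rather than conceptual: the Shapley payoffs on the two-link path are the most delicate because the central and peripheral positions are non-symmetric, and one has to enumerate marginal contributions carefully for each agent and each sub-profile. The other easy-to-miss point is that a Nash deviation can sever multiple links at once, which is what produces the binding threshold $3\sqrt{3}/2$ (from double-severing in $g^N$) rather than the larger single-severing threshold $9/2+9(\sqrt{3}-\sqrt{2})$.
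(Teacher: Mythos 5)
Your proposal is correct and follows exactly the verification the paper intends (it leaves the proof to the reader): your Shapley payoffs for the empty, one-link, two-link, and complete networks coincide with those displayed next to the nodes in Figure \ref{networkset}, and your threshold computations—including the key observation that the binding deviation at $g^N$ is severing both links, giving $3\sqrt{3}/2$ rather than $9/2+9(\sqrt{3}-\sqrt{2})$—assemble into the four cases of the proposition as stated.
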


\begin{figure}[h] 
\centering
\includegraphics[width=0.7\textwidth]{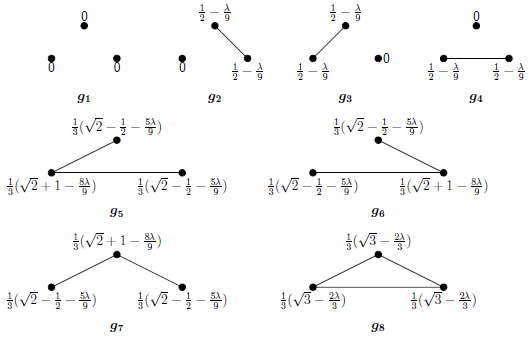}
 \caption{Possible network formation in $\mathcal{M}$}\label{networkset}
\end{figure}
The proof of Proposition \ref{eqnetwork} is straightforward and left to the reader. Clearly, Proposition \ref{eqnetwork} shows that pandemic costs affect agents' decisions in the networked economy. The parameter $\lambda$ summarizes the negative effects of the contagion in the economy. When there is no disease outbreak, or the pandemic costs are very low (lower values of $\lambda$), each agent gains by keeping bilateral relationships with others. In that situation, the complete network is likely to sustain as the equilibrium social structure in the economy. No agent has an incentive to self-isolate. However, as the pandemic costs rise, agents respond by severing some bilateral connections. For intermediate values of $\lambda$ ($\frac{3\sqrt{3}}{2} < \lambda < 4.5$), only networks with one link will be sustained in the equilibrium. This means that some agents find it rational to partially or fully self-isolate in order to reduce the spread of the virus. In the extreme case where the contagion costs are very high ($\lambda > 4.5$), a complete lockdown arises, and the empty network is the only equilibrium. 

Interestingly, the value of $\lambda$ depends on the nature of the virus. Viruses induce different severity levels. For example, COVID-19 and the flu virus have different values, inducing different network configurations in equilibrium. The different network configurations in Figure \ref{networkset} can therefore be interpreted as the networks that will arise in different scenarios regarding the nature of the virus.   

\subsection{Bias in academic publishing}\label{sec:academicreview}

\ \ \ In this section, we apply the model of a free and fair economy  to academic publishing in a knowledge environment. Generally, academic researchers have freedom to choose research topics that are likely to be published either in peer-reviewed or non-peer-reviewed outlets. However, studies show that the peer-review process is not generally anonymous, and it involves some biases (see, for example, \citet{ellison2002evolving}, \citet{heckman2017publishing}, \citet{serrano2018top5itis}, \citet{akerlof2020sins}, and the references therein). Following \citet{ellison2002evolving}, we consider a model of producing scientific knowledge in which researchers differentiate topics along two quality dimensions: importance (or $q$-quality) and hardness (or $r$-quality).\footnote{Tough the trade-off between the two quality dimensions can be viewed as a rational decision, the consequences can be detrimental to economics, as a discipline and profession. For instance, some general interest journals suffer from the ``incest factor" \citep{heckman2017publishing}, and \citet{akerlof2020sins} shows that the tendency of rewarding ``hard" topics versus ``soft" topics in economics results in ``sins of omissions" where issues that are relevant to the literature and can not be approached in a ``hard" way are ignored.} In the hypothetical and straightforward knowledge economy that we analyze, we assume that both importance and hardness levels are discrete, ordered, and are homogeneous among researchers. Formally, $Q= \{q_0, q_1, ..., q_m\}$ denotes the set of importance levels, with $q_0<q_1<...<q_m$, and $R= \{r_0, r_1, ..., r_m\}$ denotes the set of different degrees of hardness, with $r_0 <r_1<...<r_m$.

We consider a knowledge environment involving a finite set of researchers $N=\{1,..., n\}$. Each researcher selects a topic of a given importance level and degree of hardness. For every researcher $i$, a strategy $x_i = (q^i, r^{i}) \in X_i \subseteq Q \times R$, where $q^i \in Q$, and $r^i \in R$. We denote $X=\times_{j \in N} X_j$. We consider $o_i = (q_0, r_0)$ as the reference choice for researcher $i$, and $o= (o_j)_{j\in N}$ the reference outcome. A knowledge function (or technology) $f$ transforms any outcome $x \in X$ to the number of published articles $f(x) \in \mathbb{R}$, with $f(o)=0$. An allocation $\phi$ distributes $f(x)$ to active researchers so that the utility of researcher $i$, $u_i$, at the profile $x$ given the knowledge function $f$, is $u_i(x)= \phi_i(f, x)$. 

The knowledge economy $\mathcal{E}^{\phi}= (N, X, o, f, \phi, u)$ defines a free economy. Thanks to Theorem \ref{theorem: non emptyness: shapley}, the free and fair knowledge economy $\mathcal{E}^{\vect{Sh}} = (N, X, o, f, \vect{Sh}, u)$ admits a pure strategy Nash equilibrium. The allocation $\phi$ ($\neq \vect{Sh}$) in the free knowledge economy can be viewed as the current academic publishing system. As mentioned above, the latter could lead to an equilibrium outcome that shows a bias towards ``hardness" and against ``importance". To illustrate our point, we  consider a simple knowledge economy involving two active researchers $N= \{1, 2\}$ with the same ``abilities" of producing scholarly articles. Each researcher $i$'s criteria for a choice of topic belongs to the set $X_i = \{0, 1, 2, 3\}$, where each number represents a pair in $Q \times R$: ``0": (soft, less important), ``1": (soft, important), ``2": (hard, less important), and ``3": (hard, important). The knowledge function $f$ matches any profile of decisions $x= (x_1, x_2)$ made by the researchers to $f(x)$, the number of academic articles produced in the economy: $f(0,0)=0$, $f(1,0)= f(0,1)=f(2,2) =10$, $f(1,1)=20$, $f(2,0) = f(0,2)=f(3,1)=f(1,3)=f(2,3)=f(3,2)=8$, $f(3,0)= f(0,3)=4$, $f(1,2)=f(2,1)=14$, and $f(3,3)=6$. We assume that the current academic publishing system allocates articles in the knowledge economy $\mathcal{E}^{\phi}$ according to the allocation scheme $\phi$ described in Table \ref{knowledgeallocation1} below.\footnote{Although we do not have a clear evidence to support the allocation $\phi$, studies such as \citet{heckman2020publishing}, \citet{colussi2018social}, \citet{sarsons2017recognition}, and \citet{card2013nine} document that there exists a preferential treatment for some group of authors in the academic publishing process.} 
\begin{table}[htb]\hspace*{\fill}%
\begin{game}{4}{4}[Researcher~1][Researcher~2]
& $0$ & $1$ & $2$ & $3$\\
$0$ &$(0, 0)$ &$(0, 10)$ &$(0, 8)$ & $(0, 4)$\\
$1$ &$(5, 5)$ &$(5, 15)$ &$(4,10)$ & $(2, 6)$\\
$2$ &$(3, 5)$ &$(6, 8)$ &$(1, 9)$ & $(3, 5)$\\
$3$ &$(3, 1)$ &$(4, 4)$ &$(4, 4)$ & $(2, 4)$
\end{game}\hspace*{\fill}\\
\caption{Academic Knowledge under $\phi$}\label{knowledgeallocation1}
\end{table}
The free economy $\mathcal{E}^{\phi}$ admits a unique equilibrium $x^{*}= (3, 2)$ where both researchers display favor for hardness relative to importance: Researcher 1 favors hard and important, and Researcher 2 favors hard and less important. At that equilibrium $x^{*}$, the economy produces 8 scientific papers. The profile $x^{*}$  is Pareto-dominated by the outcome $(1,1)$ that produces 20 articles in the economy. 

Note that there is another distortion in Table \ref{knowledgeallocation1}. Researcher 1 does not receive the same treatment as Researcher 2. For instance, when Researcher 1 moves from the reference point to the strategy "1", he or she receives the same reward of 5 as Researcher 2. However, when Researcher 2 does the same move, he or she keeps all the benefits, and Researcher 1 receives 0 even if the knowledge function produces the same output at both profiles $(0, 1)$ and $(1,0)$. What would happen in this knowledge economy $\mathcal{E}^{\phi}$ if the Shapley distribution scheme $\vect{Sh}$ replaces $\phi$? 

Well, it is straightforward to show that the researchers are \textit{symmetric} under the knowledge function $f$. Using Anonymity and the other principles of merit-based justice, Table \ref{knowledgeallocation2} below describes the allocation of academic articles under the allocation $\vect{Sh}$. 
\begin{table}[htb]\hspace*{\fill}%
\begin{game}{4}{4}[Researcher~1][Researcher~2]
& $0$ & $1$ & $2$ & $3$\\
$0$ &$(0, 0)$ &$(0, 10)$ &$(0, 8)$ & $(0, 4)$\\
$1$ &$(10, 5)$ &$(10, 10)$ &$(7, 7)$ & $(4, 4)$\\
$2$ &$(8, 0)$ &$(7, 7)$ &$(5, 5)$ & $(4, 4)$\\
$3$ &$(4, 0)$ &$(4, 4)$ &$(4, 4)$ & $(3, 3)$
\end{game}\hspace*{\fill}\\
\caption{Academic Knowledge under $\vect{Sh}$}\label{knowledgeallocation2}
\end{table}

From Table \ref{knowledgeallocation2}, we can easily conclude that the free economy $\mathcal{E}^{\phi}$ is unfair. The identity-bias that we observe under the academic publishing system $\phi$ does not arise in the free and fair knowledge environment because the distribution scheme $\vect{Sh}$ allocates rewards based on marginal contributions. The free and fair knowledge economy $\mathcal{E}^{\vect{Sh}}$ admits the unique profile $x^{**} = (1,1)$ as equilibrium in which both researchers exhibit preferences for soft and important topics.  The outcome $x^{**}$ is Pareto-optimal and it maximizes the quantity of articles produced in the economy. Importantly, researchers produce the same number of articles at the equilibrium given their ``abilities" and the fact that they choose the same strategy. The profile $x^{**}$ in the free and fair economy $\mathcal{E}^{\vect{Sh}}$ strictly dominates the equilibrium outcome in the free knowledge economy $\mathcal{E}^{\phi}$ with the academic publishing system $\phi$.

\subsection{Exchange economies}
In this section, we apply our theory to pure exchange economies (Section \ref{sec:pureexchangeconomy}) and markets with transferable payoff (Section \ref{sec:marketswithtransferablepayoff}). 
\subsubsection{Pure exchange economies}\label{sec:pureexchangeconomy}
There are no production opportunities in a pure exchange economy (or, simply, an exchange economy), and agents trade initial stocks, or endowments, of goods (or commodities) that they possess according to a specific rule and attempt to maximize their preferences or utilities. Generally, an exchange economy consists of a list $\Omega = (N, l, (w_i),  (u_i))$, where: 
\begin{description}
\item (a) $N$ is a finite set of agents ($|N|=n<\infty$); 
\item (b) $l$ is a positive integer (the number of goods or commodities);
\item (c) the vector $w_i $ is agent $i$'s endowment vector ($w_i \in X_{i} \subseteq \mathbb{R}_{+}^{l}$), with $\mathbb{R}_{+}$ being the set of non-negative real numbers, and $X_{i}$ the agent $i$'s consumption set; and 
\item (d) $u_i: X_i \longrightarrow \mathbb{R}$ is agent  $i$'s utility function.
\end{description}
The amount of good $k$ that agent $i$ demands in the market is denoted $x_{ik}$, so that agent $i$'s consumption bundle is denoted $x_i=(x_{11}, x_{12},..., x_{1l})\in X_i$. An \textit{allocation} is a distribution of the total endowment among agents: that is, an outcome $x=(x_j)_{j\in N}$, with $x_j \in X_j$ for all $j\in N$ and $\sum\limits_{j\in N} x_j \leq \sum\limits_{j\in N} w_j$. A \textit{competitive equilibrium} of an exchange economy is a pair $(p^{*}, z^{*})$ consisting of a vector $p^{*} \in \mathbb{R}^{l}_{+}$, with $p^{*} \neq 0$ (the price vector), and an allocation $x^{*}=(x^{*}_j)_{j\in N}$ such that, for each agent $i$, we have: 
 \begin{equation*}
 p^{*} x_i^{*} \leq p^{*} w_i, \  \text{and} \ u_i(x_i^{*}) \geq u_i(x_i) \ \text{for which} \ p^{*} x_i \leq p^{*} w_i, \ x_i \in X_i.    
 \end{equation*}
We say that $x^{*}=(x^{*}_j)_{j\in N}$ is a \textit{competitive allocation}. 

In an exchange economy, we can assimilate an agent's consumption bundle to that agent's action in the market. In that respect, we can formulate an exchange economy under mild assumptions as a free and fair economy. Consider an exchange economy $\Omega = (N, l, (w_i), (u_i))$ in which the number of goods is finite ($l < \infty$), and each agent $i$'s consumption set $X_i$ is finite ($|X_i| < \infty$). For instance, one can assume that agents can only purchase or sell indivisible units of goods in the market. We can model $\Omega$ as a free and fair economy $\mathcal{E}^{\Omega}= (N, X=\times_{j\in N} X_j, o, F, \vect{Sh}, \overline{u})$ where:
\begin{description}
\item (i)  each agent $i$'s action $x_i \in X_i$;
\item (ii) the reference outcome $o$ is the vector of endowments $w$;
\item (iii) $F:  X \longrightarrow \mathbb{R}$ is the net aggregate utility function, i.e., for $x= (x_j)_{j\in N} \in X$, 
\begin{equation*}
F(x) = \sum\limits_{j\in N} [u_j(x_{j}) - u_j(w_j)], \ \text{with} \ F(w)=0;\ \text{and}\   
\end{equation*}
\item (iv) the Shapley allocation scheme $\vect{Sh}= \overline{u}$ distributes the net aggregate utility $F(x)$ between agents at each profile $x \in X$: $\overline{u}_i(x) = \vect{Sh}_i(F, x)$ for each $i\in N$. 
\end{description}
Only allocations in the free and fair economy can be selected in the equilibrium. This means that an outcome $x = (x_j)_{j\in N} \in X$ is an equilibrium in the free and fair economy if
\begin{description}
\item (1) $\sum\limits_{j\in N} x_{j} \leq \sum\limits_{j\in N} w_j$, and
\item (2) $x$ is a pure strategy Nash equilibrium of the strategic game $(N, X, \vect{Sh})$.
\end{description}
Our model differs from the exchange economy in at least two important respects. First, the incentive mechanism is different. Second, the equilibrium prediction from free exchanges between agents in both economies is different in general.  A competitive equilibrium exists in an exchange economy when some assumptions exist on agents' utilities and endowments. For instance, when utilities are continuous, strictly increasing, and quasi-concave and each agent initially owns a positive amount of each good in the market, a competitive equilibrium exists, and many equilibria might arise. However, under such assumptions on agents' utilities, the net aggregate utility function $F$ is strictly increasing, and thanks to Theorem \ref{monotonicresult2}, the free and fair economy admits a unique equilibrium. Additionally, it is not necessary to impose any assumptions on utilities and endowments to guarantee the existence of an equilibrium in a free and fair economy. We illustrate these points in the following examples.

\begin{example}
Consider an exchange economy with two goods (1 and 2) and two agents (A and B) in which agent A initially owns a positive amount of good 1, $w_A=(1, 0)$, while agent B owns a positive amount of both goods, $w_B=(2, 1)$. We assume that agent A's consumption set is $X_A=\{(1,0), (0, 0)\}$ and utility is $u_A(x_A)= u_A(x_{A1}, x_{A2})= x_{A1} + x_{A2}$. Agent B's consumption set is $X_B=\{(2,1), (1,1), (0, 1), (2,0), (1,0), (0,0)\}$ and utility is $u_B(x_B) = u_B(x_{B1}, x_{B2})= \min \{x_{B1}, x_{B2}\}$. An allocation $x=(x_A, x_B)\in X_A \times X_B$ is such that $x_{A1}+x_{B1}\leq 3$ and $x_{A2}+x_{B2}\leq 1$. We can show that there is no competitive equilibrium in this exchange economy (one reason is the fact that agent A owns zero units of good 2), while the free and fair economy admits two equilibria $x_1^{\vect{Sh}} = (w_A, w_B)$ and $x_2^{\vect{Sh}} = (w_A, (1,1))$. Each equilibrium maximizes the net aggregate utility, $F(x_1^{\vect{Sh}})=F(x_2^{\vect{Sh}})=0$, with $\vect{Sh}_A(F, x_1^{\vect{Sh}})=\vect{Sh}_B(F, x_1^{\vect{Sh}})=0$, and $\vect{Sh}_A(F, x_2^{\vect{Sh}})=\vect{Sh}_B(F, x_2^{\vect{Sh}})=0$. This example shows that a free and fair exchange economy has an equilibrium while a competitive equilibrium does not exist. The next example will show that the equilibrium of a free and fair exchange economy can coincide with the competitive equilibrium.
\end{example}


\begin{example}\label{pureexchange}
Consider a Shapley-Shubik economy \citep{shapley1977example} in which there are two agents and two goods. Agent A is endowed with 2 units of good 1, $w_A=(2,0)$, and agent B is endowed with 2 units of good 2, $w_B=(0, 2)$. We assume that agent A's consumption set is $X_A=\{(0,0), (0, 1), (0, 2), (1, 0), (1,1), (1,2), (2,0), (2,1), (2,2)\}$ and his or her utility function is $u_A(x_{A1}, x_{A2}) = x_{A1} + 3x_{A2} - \frac{1}{2} (x_{A2})^2$; agent B's consumption set is $X_B=\{(0,0), (1,0), (2,0), (0,1), (1,1), (2,1), (0,2), (1,2), (2,2)\}$ and his or her utility function is $u_B(z_{B1}, x_{B2}) = x_{B2} + 3x_{B1} - \frac{1}{2} (x_{B1})^2$. Assume that good 1 is the numeraire ($p_1=1$), and let $p=p_2$ and $X=X_A \times X_B$. It is straightforward to note that not all pairs of actions in $X$ are feasible in the economy. We can show that the pair $E^{*}= (p^{*}, x^{*})$, where $p^{*}=1$, and $x^{*}= (x^{*}_A= (0, 2), x^{*}_B= (2, 0))$, is the unique competitive equilibrium of the market. At the equilibrium allocation $(p^{*}, x^{*})$, agents exchange endowments, and that transaction results in utilities: $u_A(x_1^{*})= u_B(x_2^{*})=4$. Similarly, strategic interactions among agents in the free and fair market yield the same outcome $x^{*}$. To show that result, we use an approach that allows us to simplify calculations in the free and fair economy. 

Let us denote by $\overline{X}$ the subset of allocations ($\overline{X} \subset X$), and consider the following decisions: $a$ ``keep the full endowment", $b$ ``sell 1 unit of good", and $c$ ``sell the full endowment." Consider $\overline{X}_A=\overline{X}_B=\{a, b, c\}$ as each agent's set of decisions.  Each vector of decisions in $\overline{X}_A \times \overline{X}_B$ yields a unique outcome $(x_{A}, x_{B}) \in \overline{X}$. Precisely, the vector $(a, a)$ entails the unique profile $x=(w_A, w_B)=((2, 0), (0, 2))$; $(a, b)$ corresponds to $x=((2, 1), (0, 1))$;  $(a, c)$ corresponds to $x=((2, 2), (0, 0))$; $(b, a)$ corresponds to $x=((1,0), (1, 2))$; $(b, b)$ corresponds to $x=((1, 1), (1, 1))$; $(b, c)$ corresponds to $x=((1, 2), (1, 0))$; $(c, a)$ corresponds to $x=((0, 0), (2, 2))$; $(c, b)$ corresponds to $x=((0, 1), (2, 1))$; and $(c, c)$ corresponds to $x=((0, 2), (2, 0))$. The net aggregate utility function $F$ is defined as: $F(x)= F(x_A, x_B)= u_A (x_{A}) + u_B(x_{B})-4$. Using the strategy profile $(a, a)$ as the reference point, Table \ref{markerallocation1} describes agents' utilities in the free and fair economy. For each agent, decision $c$ strictly dominates decisions $a$ and $b$. It follows that the vector $(c, c)$ which corresponds to the outcome $x^{\vect{Sh}}=((0, 2), (2, 0)) = x^{*}$ is  the unique equilibrium in the free and fair economy. In this case, the equilibrium coincides with the competitive allocation.
\begin{table}[htb]\hspace*{\fill}%
\begin{game}{3}{3}[Agent~A][Agent~B]
& $a$ & $b$ & $c$ \\
$a$ &$(0, 0)$ &$(0, 1.5)$ &$(0, 2)$ \\
$b$ &$(1.5, 0)$ &$(1.5, 1.5)$ &$(1.5, 2)$ \\
$c$ &$(2, 0)$ &$(2, 1.5)$ &$(2, 2)$ 
\end{game}\hspace*{\fill}\\
\caption{Utilities in the free and fair economy}\label{markerallocation1}
\end{table}
\end{example}

\subsubsection{Markets with transferable payoff}\label{sec:marketswithtransferablepayoff}

\ \ \ A market with transferable payoff is a variant of a pure exchange economy in which each agent in the economy is endowed with a bundle of goods that can be used as inputs in a production system that the agent operates. All production systems transform inputs into the same kind of output (i.e., money), and this output can be transferred between the agents. In a market, the payoff can be directly transferred between agents, while in a pure exchange economy only goods can be directly transferred. Following \citet{osborne1994course}, a market with transferable payoff consists of a list $\Pi = (N, l, (w_i), (f_i), (u_i))$, where: 
\begin{description}
\item (a) $N$ is a finite set of agents ($|N|=n<\infty$); 
\item (b) $l$ is a positive integer (the number of input goods);
\item (c) the vector $w_i$ is agent $i$'s endowment vector ($w_i \in X_i \subseteq \mathbb{R}_{+}^l$), with $X_{i}$ being the agent $i$'s input set; 
\item (d) $f_i: X_i \longrightarrow \mathbb{R}$ is agent  $i$'s continuous, non-decreasing, and concave production function; and
\item (e) $u_i$ is agent $i$'s utility function: $u_i(f_i, p, x_i)= f_i(x_i) - p(x_i-w_i)$, with $p \in \mathbb{R}^{l}_{+}$ (the vector of positive input prices), and $x_i \in X_i$.
\end{description}
In the market, an \textit{input vector} is a member of $X_i$, and a profile $(x_j)_{j\in N}$ of input vectors for which $\sum\limits_{j\in N} x_j \leq \sum\limits_{j\in N} w_j$ is an \textit{allocation}. We denote $w= (w_j)_{j\in N}$. Agents can exchange inputs at fixed prices $p \in \mathbb{R}^{l}_{+}$, which are expressed in terms of units of output. At the end of the trade, if agent $i$ holds the bundle $x_i$, then his or her net expenditure, in units of output, is $p (x_i - w_i)$. Agent $i$ can produce $f_i (x_i)$ units of output, so that his or her net utility is $u_i(f_i, p, x_i)$. A price vector $p^{*} \in \mathbb{R}^{l}_{+}$ generates a \textit{competitive equilibrium} if, when agent $i$ chooses his or her trade to maximize his or her utility, the resulting profile $(x^{*}_i)_{i\in N}$ of input vectors is an allocation. Formally, a \textit{competitive equilibrium} of a market is a pair $(p^{*}, (x^{*}_i)_{i\in N})$ consisting of a vector $p^{*} \in \mathbb{R}^{l}_{+}$ and an allocation $(x^{*}_i)_{i\in N}$ such that, for each agent $i$, the vector $x^{*}_i$ maximizes his or her utility $u_i(f_i, p^{*}, x_i)$, for each $x_i \in X_i$. The list $(N, l, w, (f_i), (u_i))$ defines a competitive market with transferable payoff.  

\vspace{1mm}
In a market with transferable payoff, we can view an agent's input vector as an agent's action in the market.
Therefore, as in section \ref{sec:pureexchangeconomy}, we can write a market with transferable payoff under mild assumptions as a free and fair economy. Consider a market with transferable payoff $\Pi = (N, l, w, (f_i), (u_i))$ in which the number of input goods is finite ($l < \infty$), and each agent $i$'s input set $X_i$ is finite ($|X_i| < \infty$). As in Section \ref{sec:pureexchangeconomy}, we can model $\Pi$ as a free and fair market $\mathcal{E}^{\Pi} = (N, X=\times_{j\in N} X_j, o, F, \vect{Sh}, \overline{u})$, with the difference that for $x= (x_j)_{j\in N} \in X$, 
\begin{equation*}
F(x) = \sum\limits_{j\in N} [f_j(x_{j}) - f_j(w_j)].   
\end{equation*}

As in the analysis in section \ref{sec:pureexchangeconomy} below, we provide examples that show similarities (Example \ref{marketwithtransferEx1}) and differences (Example \ref{marketwithtransferEx2}) between the predictions of free and fair markets and markets with transferable payoff.

\vspace{2mm}

\begin{example}\label{marketwithtransferEx1}
We consider a single-input market with transferable payoff in which there are two homogeneous agents who have the same production, $w_1 = w_2=1$, $f_i(x_i) = \sqrt{x_i}$, $i\in \{1,2\}$, and $X_1=X_2=\{0, 1, 2\}$. The pair $E^{*}= (p^{*}=\frac{1}{2}, x^{*}=(w_1, w_2))$ is the unique competitive equilibrium of the market, and $u_1(p^{*}, x_1^{*})= u_2(p^{*}, x_2^{*})=1$. Similarly, strategic interactions among agents in the free and fair market yield the same outcome $x^{*}$. 
\end{example}

\begin{example}\label{marketwithtransferEx2}
As mentioned in Section \ref{sec:pureexchangeconomy}, generally, the equilibrium predictions of a free and fair economy and a market with transferable payoff do not coincide. To showcase this point, we consider a market in which agents' production functions are not concave. Consider a single-input market with transferable payoff in which there are two heterogeneous agents in production: $w_1= 1$, $X_1=\{0, 1,2,3\}$, and $f_1(x_1) = \frac{1}{2} x_1^{2}$; and $w_2=2$, $X_2=\{0, 1,2,3\}$, and $f_2(x_2) = x_2^{2}$. In the competitive market, the utility functions are convex and given by: $u_1(p, x_1)=\frac{1}{2} x_1^{2}-p(x_1-1)$ and $u_2(p, x_2)=z_2^{2}-p(x_2-2)$. There is no exchange in this market, while strategic interactions among agents in the  free and fair market yield a different outcome: $x^{\vect{Sh}}=(0, 3)$. 
\end{example}


\section{Contributions to the closely related literature} \label{sec:literature}

\ \ \ In this paper, we propose a model of a free and fair economy, defining a new class of non-cooperative games, and we apply it to a variety of economic environments. We prove that four elementary principles of distributive justice, of long tradition in economic theory, guarantee the existence of a pure strategy Nash equilibrium in finite games. In addition, we show that when an economy violates these principles, a pure strategy equilibrium may not exist, resulting in instability in agents' actions and in income volatility.
We extend this model to incorporate social justice and inclusion. In this more general model, we also prove several results on equilibrium existence and efficiency.

Our work contributes to several literatures. It is related to studies of group incentives in multi-agent problems under certainty. \citet{holmstrom1982moral} explores the effects of moral hazard in individual incentives and efficiency in organizations with and without uncertainty. Like \citet{holmstrom1982moral}, we consider that in a free economy, any agent has the freedom to choose any action (or input) from his or her set of strategies, and the combination of actions from agents generates a measurable output. However, unlike \citet{holmstrom1982moral}, there is no uncertainty in the supply of inputs, and we assume that our allocation scheme follows basic principles of distributive justice. It follows that our scope, analysis and applications are very different. Moreover,  \citet{holmstrom1982moral} finds an impossibility result in his setup (see, \citet[Theorem 1, p. 326]{holmstrom1982moral}), but our analysis implies that this result does not extend under fair principles in a framework with finite action sets. Moreover, we show that any free and fair economy which is strictly monotonic admits a unique equilibrium, and this equilibrium is optimal and Pareto-efficient (Theorem   \ref{monotonicresult2}). Our findings therefore underscore the role of justice in shaping individual incentives, stabilizing contracts among private agents, and enhancing welfare.

By incorporating normative principles into non-cooperative game theory, we have introduced a new class of finite strategic form games that always admit a Nash equilibrium in pure strategies. We view this paper as contributing to the small but growing literature that seeks to uncover conditions under which a pure strategy Nash equilibrium exists in a non-cooperative game with simultaneous moves. \citet{nash1951non} shows a very prolific result on the existence of equilibrium points in a finite non-cooperative games. \citet{nash1951non} also shows that there always exists at least one pure strategy equilibrium in finite symmetric
games. However, \citet{nash1951non} was silent about the existence of pure strategy equilibrium in either finite or infinite non-symmetric strategic form games. Subsequent research has searched for sufficient and necessary conditions for the existence of pure strategy Nash equilibrium in different structure of strategic form games. Early contributions in this respect include, among others,  \citet{debreu1952social}, \citet{glicksberg1952further}, \citet{gale1953theory}, \citet{schmeidler1973equilibrium}, \citet{mas1984theorem}, \citet{khan1995pure}, \citet{athey2001single} in continuous games;  \citet{dasgupta1986existence}, \citet{dasgupta1986existence1}, \citet{reny1999existence}, \citet{carbonell2011existence}, \citet{reny2016nash},  \citet{nessah2016existence} in discontinuous economic games; \citet{monderer1996potential} in potential games; and \citet{ziad1999pure} in fixed-sum games. In these studies, scholars use different concepts of continuity, convexity and appropriate fixed point results along with some restrictions on utility functions to prove the existence of a pure strategy Nash equilibrium. Other contributions that  guarantee the existence of equilibrium in pure strategies for finite games include, among others, \citet{rosenthal1973class}, \citet{mallick2011existence}, \citet{carmona2020pure}, and the references listed therein. We follow a different approach from this literature. Unlike our paper, this literature has not approached the issue of equilibrium existence in a non-cooperative game from a normative angle. We also apply our theory to different economic environments, including applications surplus distribution in a firm, exchange economies, self-enforcing lockdown in networked economies facing contagion, and bias in academic publishing.

Finally, in addition to the previous point, our work can also be viewed as contributing to the Nash Program \citep{nash1953two}, which bridges non-cooperative and cooperative game theory. However, we significantly depart from the main approach taken in this literature so far. This approach has generally sought to define a non-cooperative game whose solution coincides with the outcomes of a cooperative solution concept; see \citet{serranosixty} for a recent survey on this literature. Our approach, on the contrary, follows the opposite direction. It asks if equilibrium can be found in a strategic form game in which payoffs obey natural axioms inspired by cooperative game theory.

\section{Conclusion}\label{conclusion}
\ \ \ In this paper, we examine how elementary principles of justice and ethics, of long tradition in economic theory, affect individual incentives in a competitive environment and determine the existence and efficiency of self-enforcing social contracts. To formalize this problem, we introduce a model of a \textit{free and fair economy}, in which each agent freely and non-cooperatively chooses their input from a finite set, and the surplus generated by these choices is distributed following four ideals of market justice, which are anonymity, local efficiency, unproductivity, and marginality. We show that these ideals guarantee the existence of a pure strategy Nash equilibrium. However, an equilibrium need not be unique or Pareto-efficient. We uncover an intuitive condition---\textit{strict technological monotonicity}---, which guarantees equilibrium uniqueness and efficiency. Interestingly, this condition does not guarantee equilibrium efficiency (or even existence) when ideals of justice are violated in an economy. These ideals therefore lead to positive incentives, given their desirable equilibrium and efficiency properties. 

We extend our analysis to incorporate social justice and inclusion, implemented in the form of progressive taxation and redistribution and guaranteeing a basic income to unproductive agents. In this more general setting, we generalize all of our findings. In addition, we examine how the tax policy affects efficiency, showing that there is a tax rate threshold above which an equilibrium  that is Pareto-efficient always exists in the economy, even in the absence of technological monotonicity. Moreover, we show that if a free economy is able to choose its reference point, it can always do so to induce an efficient outcome that is self-enforcing, even if this economy is not monotonic.

By incorporating normative principles into non-cooperative game theory, we have defined a new class of finite strategic form games that always admit a pure strategy Nash equilibrium. We develop applications to some classical and recent economic problems, including the allocation of goods in an exchange economy, surplus distribution in a firm, self-enforcing lockdown in a networked economy facing contagion, and publication bias in academic publishing. This variety of applications is possible because we impose no particular assumptions on the structure of agents' action sets, and our setting is fully non-parametric.



\newpage
\bibliographystyle{plainnat}
\bibliography{main}

\newpage
\section*{Appendix}

\textbf{Proof of Proposition \ref{uniqueshapley}}.

\textbf{Sufficiency}. We show that the allocation scheme $\vect{Sh}$ satisfies \textbf{ALUM}.

\textbf{Anonymity}. Let $f\in P(X)$, $x\in X$, $\pi^x \in \mathcal{S}^x_{n}$, and $i$ be an agent. 
We show that $\vect{Sh}_{i}(\pi^x f^x, x) = \vect{Sh}_{\pi^x(i)}(f^x, x)$.
\begin{enumerate}
    \item If $i\notin N^x$, then $x_i =o_i$, and $\pi^x(i)=i$. 
\begin{equation*} 
\begin{split}
\vect{Sh}_{i}(f^x, x) &= \sum\limits_{a \in \Delta_{0}^{i}(x)}\varphi (a, x) \left\{f^x(a+x_{i}e_{i})- f^x(a)\right\}\\
&= \sum\limits_{a \in \Delta_{0}^{i}(x)}\varphi (a, x) \left\{f^x(a)- f^x(a)\right\}\\
&=0.
\end{split}
\end{equation*}  
Similarly,
\begin{equation*} 
\begin{split}
\vect{Sh}_{i}(\pi^xf^x, x) &= \sum\limits_{a \in \Delta_{0}^{i}(x)}\varphi (a, x) \left\{\pi^xf^x(a+x_{i}e_{i})- \pi^xf^x(a)\right\}\\
&= \sum\limits_{a \in \Delta_{0}^{i}(x)}\varphi (a, x) \left\{f^x(\pi^x(a+x_ie_i))- f^x(\pi^x(a))\right\}.
\end{split}
\end{equation*} 
 For $a \in \Delta_{0}^{i}(x)$ and $x_i=o_i$, we have $\pi^x(a+x_ie_i)= \pi^x(a)$, and $\vect{Sh}_{i}(\pi^xf^x, x)=0$. Therefore, for each $i\notin N^x$, we can conclude that $\vect{Sh}_{i}(\pi^x f^x, x) = \vect{Sh}_{\pi^x(i)}(f^x, x)$.

\item If $ i\in N^x$, then $x_i \neq o_i$. Assume that $\pi^x(i)=j$. Then, $j\in N^x$ and $x_j \neq o_j$.
\begin{equation*} 
\begin{split}
\vect{Sh}_{j}(f^x, x) &= \sum\limits_{a \in \Delta_{0}^{j}(x)}\varphi (a, x) \left\{f^x(a+x_{j}e_{j})- f^x(a)\right\}\\
&= \sum\limits_{a \in \Delta_{0}^{j}(x)}\varphi (a, x) \left\{f(a+x_j e_j)- f(a)\right\}).
\end{split}
\end{equation*}
Similarly,
\begin{equation*} 
\begin{split}
\vect{Sh}_{i}(\pi^xf^x, x) &= \sum\limits_{a \in \Delta_{0}^{i}(x)}\varphi (a, x) \left\{\pi^xf^x(a+x_{i}e_{i})- \pi^xf^x(a)\right\}\\
&= \sum\limits_{a \in \Delta_{0}^{i}(x)}\varphi (a, x) \left\{f^x(\pi^x(a+x_ie_i))- f^x(\pi^x(a))\right\}\\
&= \sum\limits_{a \in \Delta_{0}^{i}(x)}\varphi (a, x) \left\{f(\pi^x(a+x_ie_i))- f(\pi^x(a))\right\}
\end{split}
\end{equation*} 
$a \in \Delta_{0}^{i}(x)$ implies $a= (a_1, ...,\underbrace{o_i}_{i^{\text{th} \ \text{component}}},...,a_n)$. The vector $\pi^x(a)= (\pi^x_1(a), ...,\underbrace{\pi^x_j(a)}_{j^{\text{th} \ \text{component}}},...,\pi^x_n(a))$. Given that $j= \pi^x(i)$ and $a_i =o_i$, it follows that $\pi^x_j(a)= o_j$ and $\pi^x(a) \in \Delta_{0}^{j}(x)$. We also have $a+x_ie_i= (a_1, ...,\underbrace{x_i}_{i^{\text{th} \ \text{component}}},...,a_n)$. Given that $j= \pi^x(i)$ and $(a+x_ie_i)_i =x_i \neq o_i$, it follows that $\pi^x_j(a+x_ie_i)= x_j$. Note that we can write $\pi^x(a+x_ie_i)= \pi^x(a) + x_je_j$. Therefore,
\begin{equation*} 
\begin{split}
\vect{Sh}_{i}(\pi^xf^x, x) &= \sum\limits_{a \in \Delta_{0}^{i}(x)}\varphi (a, x) \left\{f(\pi^x(a) +x_je_j)- f(\pi^x(a))\right\}\\
&= \sum\limits_{b \in \Delta_{0}^{j}(x)}\varphi (b, x) \left\{f(b +x_je_j)- f(b)\right\}, \ \text{where} \ b=\pi^x(a)\\
&=\vect{Sh}_{j}(f^x, x).
\end{split}
\end{equation*} 
\end{enumerate}
It follows that the allocation $\vect{Sh}$ satisfies $x$-Anonymity for each $x\in X$. Hence, $\vect{Sh}$ satisfies Anonymity.
\vspace{2mm}

\textbf{Local Efficiency.} For any $f \in P(X)$ and $x \in X$, it is immediate that $\sum\limits_{i\in
N}\vect{Sh}_{i}(f, x) = f(x)$.

\vspace{2mm}

\textbf{Unproductivity}. If  agent $i$ is unproductive, then for any $f \in P(X)$ and $x \in X$, it is immediate that $\vect{Sh}_{i}(f, x) = 0$, since $mc(i, f, a, x)=0$ for each $a \in \Delta_0^{i} (x)$.

\textbf{Marginality.} Let $f, g \in P(X)$ such that $mc(i,f, x',x)\geq mc(i,g, x',x)$ for all $i\in N$, $x \in X$ and $x' \in \Delta_o^{i}(x)$. By the definition of the value $\vect{Sh}$, it is immediate that $\vect{Sh}_{i}(f, x)\geq \vect{Sh}_{i}(g, x)$.

\textbf{Necessity.} In this part of the proof, we prove the uniqueness of the Shapley value. Consider another allocation procedure $\phi$ which satisfies \textbf{ALUM}.

\vspace{2mm}

Define the following production function $f_{x} \in P(X)$ for each $x\in X$ by:
\begin{equation*}
{f_{x}(y)=\left\{
\begin{array}{ccc}
1 & if & x \in \Delta (y) \\
0 & if & x \notin \Delta (y) %
\end{array}%
\right. }
\end{equation*}%

where $x \in \Delta (y)$ if and only if $[x_i \neq y_i \Rightarrow x_i = o_i]$. 

\begin{lemma}[\citet{PongouTondji2018}]
\label{app_lemma_productionfunctionbasis} 
Any production function is a linear combination of the production functions $f_{x}$:
\begin{equation*}\label{functionf}
f=\sum\limits_{x\in X}c_{x}(f)f_{x},\ \text{where}\
c_{x}(f)=\sum\limits_{x^{\prime } \in \Delta (x)}(-1)^{|x|-|x^{\prime
}|}f(x^{\prime }).
\end{equation*}
\end{lemma}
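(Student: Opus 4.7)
The plan is to exploit the fact that, for any $y \in X$, the set $\Delta(y)$ of sub-profiles of $y$ carries the structure of a Boolean lattice. Unwinding the definition, $x \in \Delta(y)$ if and only if $N^x \subseteq N^y$ and $x_i = y_i$ for every $i \in N^x$. Consequently, writing $y_S$ for the sub-profile with $(y_S)_i = y_i$ when $i \in S$ and $(y_S)_i = o_i$ otherwise, the map $S \mapsto y_S$ is an order-preserving bijection between $(2^{N^y}, \subseteq)$ and $(\Delta(y), \Delta)$. Under this identification, $f_x(y) = 1$ exactly when $x \in \Delta(y)$, and the sub-profile relation is transitive.

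With this structural observation in hand, I would verify the claimed identity pointwise. Fix $y \in X$ and compute
\begin{align*}
\sum_{x \in X} c_x(f)\, f_x(y)
&= \sum_{x \in \Delta(y)} c_x(f)
= \sum_{x \in \Delta(y)} \sum_{x' \in \Delta(x)} (-1)^{|x|-|x'|} f(x') \\
&= \sum_{x' \in \Delta(y)} f(x') \sum_{\substack{x \in \Delta(y) \\ x' \in \Delta(x)}} (-1)^{|x|-|x'|},
\end{align*}
where swapping the order of summation is legitimate by transitivity of the sub-profile relation. Passing to the Boolean-lattice image, if $x'$ corresponds to $T \subseteq N^y$, the inner sum runs over $S$ with $T \subseteq S \subseteq N^y$, and with $k := |N^y| - |T|$ it collapses to $\sum_{j=0}^{k} \binom{k}{j} (-1)^j$, which equals $1$ if $k = 0$ and $0$ otherwise. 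Only $x' = y$ survives, so the right-hand side reduces to $f(y)$.

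Uniqueness of the coefficients then comes for free: since $c_o(f) = f(o) = 0$ whenever $f \in P(X)$, the effective decomposition ranges over $x \in X \setminus \{o\}$, a family of the same cardinality as $\dim P(X) = |X|-1$. Having shown this family spans $P(X)$, it is automatically a basis, pinning down the coefficients $c_x(f)$ uniquely.

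The hard part is really just the combinatorial identification of $\Delta(y)$ with the Boolean lattice on $N^y$; once that identification is in hand, the remainder is a textbook M\"obius inversion on a product of two-element chains. A minor subtlety to dispatch cleanly is that $f_o$ itself does not lie in $P(X)$, a mismatch automatically absorbed by the vanishing coefficient $c_o(f) = 0$ induced by the normalization $f(o) = 0$.
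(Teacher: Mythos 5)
Your proof is correct. One point of comparison to note: the paper does not actually prove this lemma itself --- it is imported by citation from \citet{PongouTondji2018} --- so there is no in-paper argument to match; your write-up supplies exactly what the citation leaves implicit. The route you take is the natural one: identify $\Delta(y)$ with the Boolean lattice $2^{N^y}$ via $S\mapsto y_S$ (valid because $y_i\neq o_i$ for $i\in N^y$, so distinct subsets give distinct sub-profiles), evaluate $\sum_{x}c_x(f)f_x$ pointwise at $y$, swap the two finite sums using transitivity of the sub-profile relation to keep the outer index inside $\Delta(y)$, and collapse the inner alternating sum with $\sum_{j=0}^{k}\binom{k}{j}(-1)^j=0$ for $k\geq 1$, leaving only $x'=y$. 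This is precisely M\"obius inversion on a product of two-element chains, and your handling of the two boundary subtleties is right: the $x=o$ term is harmless because $c_o(f)=f(o)=0$, and for $x\neq o$ one has $f_x(o)=0$ so $f_x\in P(X)$, which together with the cardinality count $|X\setminus\{o\}|=\dim P(X)$ upgrades the spanning statement to the basis/uniqueness claim the main text relies on. An equivalent alternative, had you preferred it, would be to order $X$ by $|x|$ and argue that the matrix $\bigl(f_x(y)\bigr)_{x,y\neq o}$ is unitriangular, giving invertibility directly; your inversion argument buys the explicit coefficient formula at the same cost.
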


Let $f \in P(X)$. Define the index $I$ of the production function $f$ to be the number
of non-zero terms in some expression for $f$ in (\ref{functionf}). The theorem is proved by induction on $I$. 

\begin{description}
\item a) If $I=0$, then $f\equiv 0$. Let $x \in X$ and $i\in N$. Then, $mc(i,f,a,x)=0$ for  all $a \in X$ such  that $a \in \Delta_o^{i}(x)$. Therefore, by Unproductivity, $\vect{Sh}_{i}(f, x)=\phi_i(f, x)=0$. 

\item b) If $I=1$, then $f=c_{x}(f)f_{x}$ for some $x\in X$. Consider $N^{x}=\left\{ l\in N:x_{l}\neq o_l\right\}$.

\textbf{Step 1}. Let $i\notin N_{x}$, i.e., $x_i=o_i$. 

For any $a \in X$ such that $a \in \Delta_{0}^{i} (x)$, we have $f(a+x_{i}e_{i})-f(a)=0$, i.e., $mc(i,f,a,x)=0$. It follows that $\vect{Sh}_i(f, x)=0$. Let $y \in X$ with $y \neq x$. Then, $x \in \Delta (y)$ or $x \notin \Delta (y)$. 

\begin{itemize}
\item If $x \in \Delta (y)$, then $x_l= y_l$ for each $l \in N^x$. If $y_i = o_i$, then $\phi_i(f, y)=0=\vect{Sh}_i(f, y)$. Assume $y_i \neq o_i$. Then, for any $a \in \Delta_{0}^{i} (y)$, we have $mc(i, f, a, y)= f(a+y_ie_i)-f(a)$. If $x \in \Delta (a)$, we also have $x \in \Delta (a+y_ie_i)$ because $x_i=o_i$ and $y_i \neq o_i$. Similarly if $x \notin \Delta (a)$, then $x \notin \Delta (a+y_ie_i)$. Therefore, $mc(i, f, a, y)=0$ for each $a \in \Delta_{0}^{i} (y)$, and $\vect{Sh}_i(f, y)=0$.

\item $x \notin \Delta (y)$, then $f(y)=0$. If $y_i = o_i$, then $\vect{Sh}_i(f, y)=0$. Assume $y_i \neq o_i$. Then, for any $a \in \Delta_{0}^{i} (y)$, we have $mc(i, f, a, y)= f(a+y_ie_i)-f(a)$. If $x \in \Delta (a)$, then for each $l\in N^x$, $x_l=a_l\neq o_l$. Or $a\in \Delta (y)$ implies that for each $l\in N^x$, we will have $a_l=y_l$, because $a_l \neq o_l$. Therefore, for each $l\in N^x$, $a_l=y_l=x_l$, and given that $y_i \neq o_i$ and $x_i=o_i$, we have $x \in \Delta (y)$, a contradiction. In fact $x \in \Delta (a)$ if and only if $x \in \Delta (a+y_ie_i)$. Thus, $mc(i, f, a, y)=0$ for each $a \in \Delta_{0}^{i} (y)$, and $\vect{Sh}_i(f, y)=0$.
\end{itemize}

Given that agent $i$ is unproductive, it follows that $\phi _{i}(f, y)=\vect{Sh}_i(f, y)=0$ for each $y\in X$.

\textbf{Step 2}. Let $i,j\in N$ such that $i,j\in N^{x}$ and $y \in X$. Let $\pi^y = (ij)$ a permutation. Given that $\phi$ satisfies Anonymity, it follows that $\phi$ satisfies $y$-Anonymity, and  $\phi_i(\pi^x f^y, y) = \phi_{j} (f^y , y)$. For each $z \in \Delta (y)$, we have $\pi^yf^y(z) = f^y(z)$. Thus, $\pi^y f^y = f^y$, and $\phi_i(f^y, y) = \phi_{j} (f^y , y)$. By Local efficiency, $\sum \limits_{k\in N^x} \phi_k(f^y, y)= f^y(y)= f(y)$. Therefore, $\sum \limits_{k\in N^x} \phi_k(f^y, y) = |N^x|\phi_k(f^y, y)$, and for each $k\in N^x$, $\phi_k(f^y, y)= \frac{f^y(y)}{|N^x|}=\frac{f(y)}{|N^x|}$. If $x \in \Delta (y)$, then $f(y)= c_x(f)$. Otherwise, $f(y)=0$, and for each $k\in N^x$, $\phi_k(f, y) = \phi_k(f^y, y) = \vect{Sh}_k(f, y)$.

\item c) Assume now that $\phi $ is the value $\vect{Sh}$ whenever the index of $f$
is at most $I$ and let $f$ have index $I+1$, with:
\begin{equation*}
f=\sum\limits_{k=1}^{I+1}c_{x^{k}}(f)f_{x^{k}},\ \text{all}\ c_{x^{k}}\neq 0,\ \text{and}\
x^{k}\in X.
\end{equation*}%
For $k\in \left\{ 1,2,...,I+1\right\} $, consider:
\begin{equation*}
N^{x^k}=\left\{ l\in N:x_{l}^{k}\neq o_k\right\} ,\
\overline{N}=\bigcap\limits_{k=1}^{I+1}N^{x^k},\ \text{and assume} \ i\notin \overline{N}.
\end{equation*}%
Define the following production function:
\begin{equation*}
g=\sum\limits_{k:i\in N^{x^k}}c_{x^{k}}(f)f_{x^{k}}.
\end{equation*}%
The index of $g$ is at most $I$. Let $x,a\in X$ such that $a \in \Delta_{0}^{i}(x)$.
Then $f(a+x_{i}e_{i})-f(a)=g(a+x_{i}e_{i})-g(a)$. Consequently, using Marginality, $\phi _{i}(f,x)=\phi _{i}(g, x)$. By induction, we have:
\begin{equation*}
\phi _{i}(f, x)=\sum\limits_{k: i\in N^{x^k}} \frac{c_{x^{k}}(f) f_{x^{k}} (x) }{|x^{k}|}=\vect{Sh}_i(f, x), \ \text{for} \ x\in X.
\end{equation*}%

It remains to show that for each $x \in X$, $\phi _{i}(f,x)=\vect{Sh}{i}(f,x)$ when $i\in \overline{N}$. Let $x \in X$. By Anonymity, $\phi _{i}(f, x)$ is a constant $\varphi$ for all members of $\overline{N}$; likewise the value $\vect{Sh}_{i}(f, x)$ is some constant $\varphi'$ for all members of $\overline{N}$ (with $\overline{N}>0)$. By Local efficiency, 
\begin{equation*}
|\overline{N}|\phi _{i}(f, x)= |\overline{N}| \varphi= f(x),
\end{equation*}%
so that,
\begin{equation*}
\varphi=\frac{f(x)}{|\overline{N}|}.
\end{equation*}
Similarly, 
\begin{equation*}
|\overline{N}|\vect{Sh} _{i}(f, x)= |\overline{N}| \varphi'= f(x),
\end{equation*}%
so that,
\begin{equation*}
\varphi'=\frac{f(x)}{|\overline{N}|}.
\end{equation*}
It follows that $\varphi =\varphi'$, and concludes the proof.

\end{description}

\end{document}